\documentclass[onecolumn]{IEEEtran}
\IEEEoverridecommandlockouts
\usepackage{cite}
\usepackage{amsmath,amssymb,amsfonts,amscd,amsxtra,amsthm}
\usepackage{algorithmic}
\usepackage{graphicx}
\usepackage{textcomp}
\usepackage{xcolor}
\usepackage{bm} 
\usepackage{graphicx}
\usepackage{tabularx}
\usepackage{caption}
\usepackage{esvect}
\usepackage[final]{microtype}
\usepackage{hyperref}

\newtheorem{thm}{Theorem}
\newtheorem{lem}{Lemma}
\newtheorem{cor}{Corollary}
\newtheorem{rem}{Remark}

\newcommand{\mc}[1]{\mathcal{#1}}
\newcommand{\mb}[1]{\mathbb{#1}}

\def\BibTeX{{\rm B\kern-.05em{\sc i\kern-.025em b}\kern-.08em
    T\kern-.1667em\lower.7ex\hbox{E}\kern-.125emX}}
\begin{document}

\title{On the Achievable Rate Region of  the $ K $-Receiver Broadcast Channels via Exhaustive Message Splitting\\
}

\author{Rui Tang, Songjie Xie, Youlong Wu
}

\maketitle

\begin{abstract}
This paper focuses on $ K $-receiver discrete-time memoryless broadcast channels (DM-BCs) with  private messages, where the transmitter wishes to convey $K$ private messages  to $K$  receivers respectively.  A general inner bound on the capacity region is proposed based on an  exhaustive message  splitting and a $K$-level  modified Marton's coding. The key idea is to  split every message into $ \sum_{j=1}^K {K\choose j} $ submessages each corresponding to a set of users who are assigned to recover them, and then send these submessages through codewords that are jointly typical with each other.   To guarantee the joint typicality among all transmitted codewords, a sufficient condition on the subcodebooks sizes is derived through a newly establishing  hierarchical covering lemma, which extends  the 2-level multivariate covering lemma to the $K$-level case including $(2^{K}-1)$   random variables with  more intricate dependence. As the number of auxiliary random variables and rate constraints both increase linearly with  $(2^{K}-1)$, 
the standard  Fourier-Motzkin elimination procedure becomes infeasible when $K$ is large. To tackle this problem, we obtain the final form of achievable rate region with a special observation of disjoint unions of sets that constitute the power set of $ \{1,\dots,K\}$.  The proposed achievable rate region allows  arbitrary input  probability mass functions (pmfs) and improves  over all previously known ones for $ K$-receiver ($K\geq 3$) BCs whose input pmfs should satisfy certain Markov chain(s).

\end{abstract}

\begin{IEEEkeywords}
Broadcast channel, achievable rate region, superposition coding, Marton's coding,  covering lemma
\end{IEEEkeywords}

\section{Introduction}

The 2-receiver discrete-time memoryless broadcast channels (DM-BCs) are first introduced by Cover ~\cite{Cover}, who proposed the prestigious superposition coding  that outperforms  the traditional time-division strategy. However, superposition coding is optimal only for certain categories of broadcast channels such as degraded, less noisy and more capable BCs~\cite{lsmc}. The best known inner bound on the capacity region of DM-BCs is achieved by Marton’s coding with message splitting~\cite{Marton}.  The key idea  is to split each source message into common and private parts, where the common part is encoded into a cloud-center codeword, and two private parts are encoded into two separate  codewords.  To enlarge the  achievable rate region, the submitted codewords are jointly typical which is 
guaranteed by a sufficient condition on the sizes of subcodebooks established by  the covering lemma \cite{{nit}}.

For the general $ K $-receiver DM-BCs, most previous work   mainly focused on superposition coding and message splitting (or merging)~\cite{sup1,sup2,sup3}  wherein they require certain Markov chains for auxiliary random variables (RVs). Especially, ~\cite{sup3} gives a general inner bound based on superposition coding and rate-splitting using notions from order theory and lattices wherein each receiver decodes its intended message (common or private) along with the partial interference designated to it through rate-splitting. The Marton's coding with 2-level superposition coding structure (consisting of one cloud-center codebook and $K$ satellite codebooks) can be easily constructed and analyzed by the
 multivariate  covering lemma and packing lemma \cite{{nit}}. 

In this paper, we consider $ K $-receiver DM-BCs including only private messages. To inherit the characteristics of superposition coding and combine them with Marton's coding, a general inner bound {is proposed based on an  \emph{exhaustive} message  splitting and a $K$-level modified Marton's coding. More specifically, every message  is split into $\sum_{j=1}^K {K\choose j} $ submessages,  with each corresponding to a set $\mathcal{S}$ belonging to the power set of $\{1,\ldots,K\}$. The submessages respecting to $\mathcal{S}$  are encoded into an {exclusive} codeword  and will be decoded by receiver $j$ if $j\in{\mathcal{S}}$.  To  obtain a potentially larger region, we enlarge the subcodebooks sizes and use a $K$-level Marton's coding to send all  codewords that are jointly typical with each other.

Note that there are mainly two challenges on establishing the final form of achievable  rate region.  The first one is how to derive  rate conditions such that all transmitted codewords are jointly typical. The solution  is related to the covering lemma. Unfortunately, the known multivariate covering lemma  only deals with  sequences which are generated conditionally independently \cite{nit},  while  in our scheme there are  $(2^{K}-1)$ RVs and the sequences are generated under more intricate dependence. We solve this problem by dividing  $(2^{K}-1)$ RVs into $K$ levels and process them 
hierarchically, leading to a new  lemma called   \emph{hierarchical covering lemma}.     The second challenge lies on how to apply  Fourier-Motzkin elimination procedure to obtain the final form of the inner bound.   As the number of auxiliary RVs and rate conditions increase linearly with  $(2^{K}-1)$, it's infeasible   using the  standard  Fourier-Motzkin elimination procedure, in particular when $K$ is large. To tackle this problem, we aggregate the rate of submessages based on a special observation of disjoint unions of sets that constitute the power set of $\{1,\ldots,K\}$, and finally establish   the final form of achievable  rate region.

In our proposed scheme, the exhaustive message splitting enables each user to decode different  fine particles of the source message according to its observed signal, and the $K$-level Marton's coding can  enlarge the inner bound by allowing  arbitrary dependence among the input RVs, rather than satisfying certain Markov chains as in \cite{sup1,sup2,sup3}. Compared to all previous work on inner bounds of $K$-receiver DM-BCs with only private messages, our inner bound includes their regions as special cases.


}


\subsection{Notations}
Throughout the paper, we use the notation in \cite{nit}. In particular, we let $$\mc{K}\triangleq\{1,\ldots,K\}$$ for a positive integer $K$.  We use  $\mathbb{P}(\mathcal{K})$ to denote  the power set of  $\mathcal{K}$, e.g., when $K=3$, $\mathbb{P}(\mathcal{K})=\{\{1\},\{2\},\{3\},\{1,2\},\{1,3\},$  $\{2,3\},\{1,2,3\}\}$.  
{For any non-empty set $\mc{T}\in \mathbb{P}(\mc{K})$ and $\mc{T}=\{\,  k_1,k_2,$  $\ldots,k_{|\mc{T}|}  \,\}$ where $ ( k_1,k_2,\ldots,k_{|\mc{T}|}) $ is in an increasing order, define the permutation set as
$\Pi_{\mc{T}}\triangleq\{(  \pi(k_1),\pi(k_2),\ldots,\pi(k_{|\mc{T}|})):\text{ for all distinct bijections}$  $\pi :\mc{K}\to \mc{K} \}$.
}

 Let   $(X_{\mathcal{I}_1},\ldots,X_{\mathcal{I}_{2^{K}}})$ be a tuple of $2^{K}$ RVs where $\mathcal{I}_i\in \mathbb{P}(\mc{K})$ for $i=1,\ldots,2^K$.   Given a set $\mc{I}=\big\{\mc{I}_1, \mc{I}_2,\ldots,\mc{I}_{|\mc{I}|}\big\}$ with $\mc{I}_1<\mc{I}_2<\cdots<\mc{I}_{|\mc{I}|}$ in dictionary order (e.g., $\{\{1\},\{2\},$  $\{1,2\}\}$ is in dictionary order),  the subtuple of RVs with indices from $\mc{I}$ is denoted by $X\big(\mc{I}\big)\triangleq(X_{\mathcal{I}_1},\ldots,X_{\mathcal{I}_{|\mathcal{I}|}})$, and the corresponding  realizations by $x\big(\mc{I}\big)\triangleq(x_{\mathcal{I}_1},\ldots,x_{\mathcal{I}_{|\mathcal{I}|}})$. Similarly,  given $2^K$ random vectors $(X^n_{\mathcal{I}_1},\ldots,X^n_{\mathcal{I}_{2^{K}}})$, let $X^n(\mc{I})=(X_{\mc{I}_j}^n:\mc{I}_j\in\mc{I})$.  Let $\bigsqcup_{\mc{I}_j\in\mc{I}}\mc{S}_{\mc{I}_j}$ be a disjoint union where $\mc{S}_{\mc{I}_j}\subseteq  \mathbb{P}(\mc{K})$ and  $\mc{S}_{\mc{I}_j}\cap\mc{S}_{\mc{I}_k}=\emptyset$ if $j\neq k$.

%




For  positive integers $k$ and $j$, we define  $x_k^j\triangleq(x_{k,1},$  $\ldots, x_{k,j})$, and $X_k^j\triangleq(X_{k,1},\ldots, X_{k,j})$.

{We use $ \delta(\epsilon) >0$ to denote a function of $ \epsilon $ that tends to zero as $ \epsilon\to0 $. When there are multiple functions $ \delta_{1}(\epsilon),\ldots,\delta_{k}(\epsilon) $, we denote them all by a generic function $  \delta(\epsilon) $ with the understanding that $ \delta(\epsilon)=\max\{\delta_{1}(\epsilon),\ldots,\delta_{k}(\epsilon)\} $.}

 {Let an all-one column vector $ (1, \ldots , 1) $ with a specified dimension be denoted by \textbf{1}.} Albeit with an abuse of notation, we use a string of elements in a singleton to represent the set with only one element, e.g., $\bm{R}_{123}=\bm{R}_{\{123\}} $, $ \textbf{M}_{123}=\textbf{M}_{\{123\}} $, $ \mathbb{A}(1)=\mathbb{A}(\{1\}) $, $ \mathbb{B}_{k}(123)=\mathbb{B}_{k}(\{123\}) $, etc.




\section{Channel Model}
Consider a $K$-receiver DM-BC with only private messages depicted in Fig. ~\ref{fig:BC}. The setup is characterized by a  input alphabet $\mathcal{X}$, $K$  output alphabets {$(\mathcal{Y}_k: k \in \mc{K})$}, and a collection of channel transition pmfs $p\big( y_1,\dots,y_K \big| x\big)$.  At time $i\in[1:n]$, the transmitter sends the channel input $x_i\in\mathcal{X}$,  receiver $k\in\mc{K}$ observes the output $y_{k,i}\in\mathcal{Y}_k $. 


The goal of the communication is that the transmitter conveys  private messages $M_k$  to receiver~$k$, for $k\in\mc{K}$, respectively. Each $M_k$ is independently and uniformly distributed over the set $\mathcal{M}_k\triangleq[1: 2^{nR_k}]$, where  $R_k$ denotes the communication rate  of receiver $k$.

The encoder maps the messages $(M_1,M_2,\ldots,M_K)$ to a sequence $x_i\in\mathcal{X}$: \begin{equation}X_i=f^{(n)}(M_1,M_2,\ldots,M_K),\end{equation} 
and receiver~$k\in\mc{K}$ uses channel outputs $y_k^n$ 
to estimate $\hat{M}_k$ as a guess of messages $M_k$:
\begin{equation}
\hat{M}_k=g^{(n)}_{k}(Y_k^n).
\end{equation}

A rate region $( R_k:k\in\mc{K} )$ is called achievable if for every blocklength $n$, there exists an encoding function $f^{(n)}$ and  $K$  decoding functions $g^{(n)}_1,\ldots,g^{(n)}_K$  such that the error probability
\begin{equation}P^{(n)}_e=\text{Pr}\big\{\hat{M}_k\neq M_k,\exists~k\in\mc{K}\big\}\end{equation}
tends to zero as the  blocklength $n$ tends to infinity. The closure of the set of achievable rate tuple $( R_k:k\in\mc{K} )$ is called the \textit{capacity region}. 

\begin{figure}[!t]
    \centering
    \includegraphics[width=0.45\textwidth]{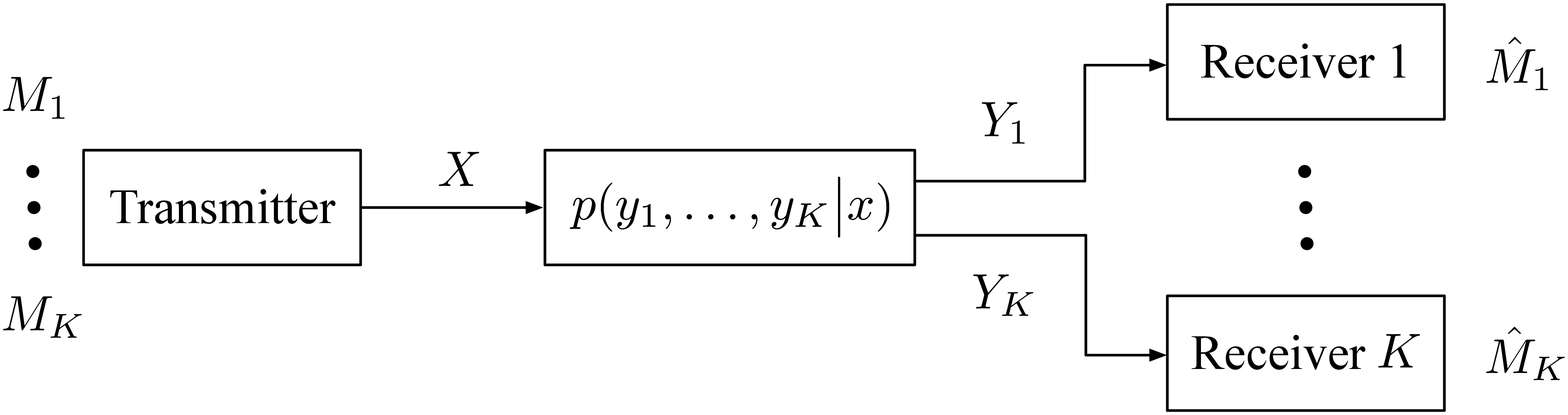}
    \caption{Broadcast Channel for $K$ receivers}
    \label{fig:BC}
    \vspace{-5mm}
\end{figure}

\section{Preliminary}\label{SecPre}
In this section, we present   decomposition of sets which will be used to in Section \ref{SecResults} and \ref{ACS}.

 Given a set 
 $\mc{S}\in \mathbb{P}(\mc{K})$, $k\in\mc{S}$ and $l$, $l'\in\mc{K}$ with $l\leq l'$, define 
 \begin{IEEEeqnarray*}{rCl}
 \mathbb{A}^{l:l'}\big( \mc{S} \big)&\triangleq&\big\{ \mc{S}'\in \mathbb{P}(\mc{K}):\,\mc{S}\subseteq\mc{S}' \,\text{and }\, |\mc{S}'|=l,l+1,\dots,l'\big\},\nonumber\\
 \mathbb{B}^{l:l'}_{k}(\mc{S}) &\triangleq& \{\,  \mc{S}'\subseteq \mc{S}:\,k\in\mc{S}', |\mc{S}'|=l,l+1,\ldots,l'  \,\}.
\end{IEEEeqnarray*}
 To simplify notations, we let   
 \begin{IEEEeqnarray*}{rClrClrCl}
 \mathbb{A}^{l:l'}&\triangleq&\mathbb{A}^{l:l'}\big( \emptyset  \big),&\mathbb{A}^{l}&\triangleq&\mathbb{A}^{l:l}\big( \emptyset  \big), &\mathbb{A}\big( \mc{S} \big) &\triangleq&\mathbb{A}^{1:K}\big( \mc{S} \big), \nonumber\\
\mathbb{A}&\triangleq&\mathbb{A}^{1:K}(\emptyset),~&\mathbb{B}_{k}(\mc{S}) &\triangleq&\mathbb{B}^{1:|\mc{S}|}_{k}(\mc{S}),~&\mathbb{B}^l_{k}(\mc{S}) &\triangleq& \mathbb{B}^{l:l}_{k}(\mc{S}).
\end{IEEEeqnarray*}

Next we show that
\begin{IEEEeqnarray}{rCl}\label{Expansion}
 \bigcup_{k\in\mc{T}} \mathbb{A}(k)  =\bigsqcup_{i=0}^{|\mc{T}|-1} \bm{\mathcal{B}}_\pi(i),
\end{IEEEeqnarray} \label{eqDeB}
 where for some $  \pi\in\Pi_{\mc{T}} $
 \begin{IEEEeqnarray}{rCl}
 \bm{\mathcal{B}}_\pi(0)&\triangleq&\mathbb{B}_{\pi(i+1)} \big(  \mc{K} \big),\\
 \bm{\mathcal{B}}_\pi(i)&\triangleq&\mathbb{B}_{\pi(i+1)} \big(  \mc{K}\setminus\{  \pi(1),\ldots,\pi(i) \}   \big).
\end{IEEEeqnarray}
 Note that  $\bigcup_{k\in\mc{T}} \mathbb{A}(i)$ denotes the subset of $\mathbb{P}(\mc{K})$ whose elements  contains at least one index belonging to $\mc{T}$, i.e., 
 $$\bigcup_{k\in\mc{T}} \mathbb{A}(k)=\{\mc{S}\in \mathbb{P}(\mc{K}):  \mc{T}\cap\mc{S}\neq \emptyset, |\mc{S}|=1,\ldots,K\},$$
  and  $\bm{\mathcal{B}}_\pi(i)$ contains all  sets  which include $\pi(i+1)$, and are subsets of  $\mc{K}\setminus\{ \pi(1),\ldots,\pi(i) \} $. {Also, we can have the following relation:
\begin{IEEEeqnarray}{rCl}
\bm{\mathcal{B}}_\pi(i)=\mb{A}(\{\pi(i+1)\}) \backslash (  \cup_{k=1}^{i}\mb{A}(\{\pi(i+1)\pi(k)\})) .
\end{IEEEeqnarray}}
 Eq. \eqref{Expansion} describes how we decompose  $\bigcup_{i\in\mc{T}} \mathbb{A}(i)$ into disjoint sets $ \bm{\mathcal{B}}_\pi(i)$, for $i\in[0:(|\mc{T}|-1)]$.

For example, consider the case $K=3$, $\mc{T}=\{1,2,3\}$ and $\pi=(2,1,3)$, we have 
\begin{IEEEeqnarray*}{rCl}
\mathbb{A}^{1:2}\big( \{2\} \big)&=&\{\{2\},\{1,2\},\{2,3\}\},\\
\bigcup_{k\in\mc{T}} \mathbb{A}(k) &=&\{\{1\},\{2,3\},\{1,2\},\{1,3\},\{2,3\},\{1,2,3\}\},\\
\mathbb{B}^{1:2}_{1}(\{1,2,3\})&=&\{\{1\},\{1,2\},\{1,3\}\},\\
\bm{\mathcal{B}}_\pi(i)|_{i=0} &=& \{\{2\},\{1,2\},\{2,3\},\{1,2,3\}\}, \\
\bm{\mathcal{B}}_\pi(i)|_{i=1} &=& \{\{1\},\{1,3\}\},
~ \bm{\mathcal{B}}_\pi(i)|_{i=2} = \{3\}.
\end{IEEEeqnarray*}
We can easily find that \eqref{Expansion} is satisfied.  Fig. \ref{fig:decomposition}  is given for illustration. To simplify the notation, an arbitrary collection of sets, e.g., $ \{1,12,123\} $, is recognized as $ \{\{1\},\{1,2\},\{1,2,3\}\} $

With the  definitions above, we can also obtain the following decomposition:
\begin{IEEEeqnarray}{rCl}\label{Expansion2}
 \bigcup_{k\in\mc{T}} \mathbb{A}^l(k)  =\bigsqcup_{i=0}^{|\mc{T}|-1} \bm{\mathcal{B}}_\pi^l(i),~{\bigcup_{k\in\mc{T}} \mathbb{A}(k) =\sum_{l=1}^K\bigsqcup_{i=0}^{|\mc{T}|-1} \bm{\mathcal{B}}_\pi^l(i),}\quad
 \end{IEEEeqnarray} 
 where 
 $\bm{\mathcal{B}}^{l}(i)\triangleq\mathbb{B}^l_{\pi(i+1)} \big(  \mc{K}\setminus\{ \pi(m) \}_{m=1}^i  \big)$. Here $ \bigcup_{k\in\mc{T}} \mathbb{A}^l(k) $ denotes the subset of $\mathbb{P}(\mc{K})$ whose element each has cardinality $l$ and contains at least one index belonging to $\mc{T}$, i.e., 
 $$\bigcup_{i\in\mc{T}} \mathbb{A}^l(i)=\{\mc{S}\in \mathbb{P}(\mc{K}):  \mc{T}\cap\mc{S}\neq \emptyset, |\mc{S}|=l\},$$ and 
 $\bm{\mathcal{B}}^l(i)$ contains all the sets with cardinalities  $l$,   containing $\pi(i+1)$, and being contained in $\mc{K}\setminus\{ \pi(m) \}_{m=1}^i $.

\begin{figure}[!t]
	\centering
	\includegraphics[width=0.45\textwidth]{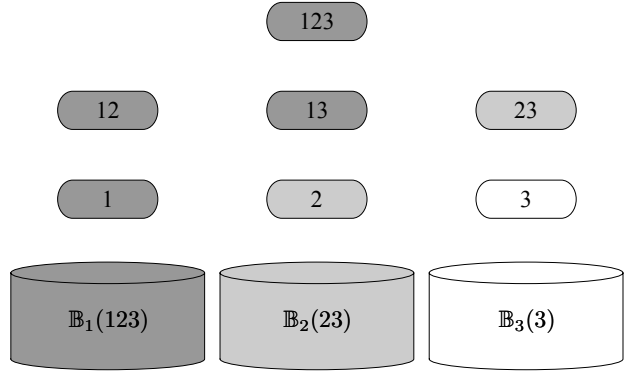}
	\caption{The illustration of the disjoint decomposition of $ \bigcup_{k\in \mc{T}} \mathbb{A}(k) $ in the case for $K=3$, $ \mc{T}=\{1,2,3\} $ and $ \pi=(1,2,3) $. The idea is that every time we pick a collection of sets descended from a set in the layer $ l $, the number of unpicked sets in layer $l-1$ is one, where layer $l$ contains all subsets of $\mc{K}$ with cardinality $l$. Therefore, $ \bigcup_{k\in[1:3]} \mathbb{A}(k) =  \mathbb{B}_{1}(123)\bigsqcup\mathbb{B}_{2}(23)\bigsqcup\mathbb{B}_{3}(3)$.}
	\label{fig:decomposition}
	\vspace{-5mm}
\end{figure}

\section{Main Results}\label{SecResults}

\begin{thm}\label{mainthm}
Using the notations presented in Section \ref{SecPre},	a rate region $\big( R_k:k\in\mc{K} \big)$ is achievable for the DM-BC $p\big( y_1,\dots,y_K \big| x\big)$ if for all  $\mc{T}\subseteq \mc{K}$ and $\pi \in\Pi_{\mc{T}}$,
	
	\begin{IEEEeqnarray}{rCl}\label{eq:rateregion}
	 &\sum_{k\in\mc{T}} R_{k}\nonumber
	\leq  \sum_{i=0}^{|\mc{T}|-1} \Big(  \sum_{\mc{S}'\in   \bm{\mathcal{B}}_\pi(i) } H\big(  U_{\mc{S}'}\mid U\big(\mathbb{A}(\mc{S}')\setminus \mc{S}'\big)\nonumber
\qquad \quad			\\
			&\qquad-  H \big(    U\big( \bm{\mathcal{B}}_\pi(i) \big) \mid  Y_{\pi(i+1)}  U\big( \mathbb{A}( \pi(i+1) )  \setminus  \bm{\mathcal{B}}_\pi(i)  \big)    \big) \Big)\nonumber\\
			&\qquad+ \sum_{l=1}^{K-1} \Big(   H\big(   U\big(\bigsqcup_{i=0}^{|\mc{T}|-1} \bm{\mathcal{B}}_\pi^{l}(i)\big)\mid U\big(\mathbb{A}^{(l+1):K}\big)   \big)
			\nonumber\\
			&\qquad\qquad- \sum_{\mc{S}\in\bigsqcup_{i=0}^{|\mc{T}|-1} \bm{\mathcal{B}}_\pi^{l}(i)} H\big(  U_\mc{S}\mid U\big(\mathbb{A}(S)\setminus \mc{S}\big)  \big)    \Big),
\end{IEEEeqnarray}
	for some  pmf $ p({U(\mathbb{A})})$ and a function $X=f(U(\mathbb{A}))$.

			\end{thm}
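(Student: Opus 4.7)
The plan is to prove achievability by a $K$-level modified Marton scheme whose $2^K-1$ codebooks are indexed by $\mathbb{P}(\mc{K})$; establish encoding success via the paper's hierarchical covering lemma and decoding success via the packing lemma; and finally aggregate the resulting auxiliary rate inequalities across receivers in $\mc{T}$ using the disjoint decomposition $\bigcup_{k\in\mc{T}}\mathbb{A}(k)=\bigsqcup_{i=0}^{|\mc{T}|-1}\bm{\mathcal{B}}_\pi(i)$ from Section~\ref{SecPre}.

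\textbf{Splitting, codebook, encoding.} For every $k\in\mc{K}$ and every $\mc{S}\in\mathbb{A}(k)$, I would split $M_k$ into independent submessages $M_{k,\mc{S}}$ at rate $R_{k,\mc{S}}\ge 0$ with $R_k=\sum_{\mc{S}\in\mathbb{A}(k)}R_{k,\mc{S}}$, and aggregate them as $M_\mc{S}=(M_{k,\mc{S}})_{k\in\mc{S}}$ at rate $\tilde R_\mc{S}=\sum_{k\in\mc{S}}R_{k,\mc{S}}$. Starting from the top level $l=K$ and proceeding downward, for each $\mc{S}\in\mathbb{A}^l$ generate $2^{n\hat R_\mc{S}}$ codewords $u_\mc{S}^n$, drawn i.i.d.\ from $p(u_\mc{S}\mid u(\mathbb{A}(\mc{S})\setminus\mc{S}))$ conditionally on the higher-level superposition codewords already constructed. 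The excess rate $\hat R_\mc{S}-\tilde R_\mc{S}\ge 0$ is the Marton cushion. Given $(M_\mc{S})$, the encoder searches for an index tuple such that all $2^K-1$ codewords $u(\mathbb{A})^n$ are jointly typical. Since the codewords are generated hierarchically rather than conditionally independently, the standard multivariate covering lemma does not apply; the hierarchical covering lemma instead guarantees that such a tuple exists with high probability provided that, for every $l\in[1:K-1]$ and every family $\mc{V}\subseteq \mathbb{A}^l$,
\begin{equation*}
\sum_{\mc{S}\in\mc{V}}(\hat R_\mc{S}-\tilde R_\mc{S}) \ge \sum_{\mc{S}\in\mc{V}} H\big(U_\mc{S}\mid U(\mathbb{A}(\mc{S})\setminus\mc{S})\big) - H\big(U(\mc{V})\mid U(\mathbb{A}^{(l+1):K})\big),
\end{equation*}
which is precisely the shape of the second half of \eqref{eq:rateregion}.

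\textbf{Decoding.} Receiver $k$ performs joint-typicality decoding on $(u(\mathbb{A}(k))^n, y_k^n)$. The packing lemma then gives, for every nonempty $\mc{V}\subseteq \mathbb{A}(k)$,
\begin{equation*}
\sum_{\mc{S}\in\mc{V}}\hat R_\mc{S} \le \sum_{\mc{S}\in\mc{V}} H\big(U_\mc{S}\mid U(\mathbb{A}(\mc{S})\setminus\mc{S})\big) - H\big(U(\mc{V})\mid Y_k, U(\mathbb{A}(k)\setminus\mc{V})\big).
\end{equation*}

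\textbf{Rate aggregation and the main obstacle.} The system now contains $O(2^K)$ auxiliary rates $R_{k,\mc{S}}$ and $\hat R_\mc{S}$, so classical Fourier--Motzkin elimination is infeasible. The key device is to sum the packing constraints across $k\in\mc{T}$ by charging each $\mc{S}\in\bigcup_{k\in\mc{T}}\mathbb{A}(k)$ to a single receiver; the decomposition \eqref{Expansion} furnishes exactly such a rule. For a fixed $\pi\in\Pi_\mc{T}$, charge every $\mc{S}\in\bm{\mathcal{B}}_\pi(i)$ to receiver $\pi(i+1)$; summing the packing bounds with these assignments produces the first two lines of \eqref{eq:rateregion}, while summing the level-$l$ hierarchical covering bounds applied to $\mc{V}=\bigsqcup_i \bm{\mathcal{B}}_\pi^l(i)$ produces the remaining two lines via \eqref{Expansion2}. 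Adding the two and using $R_k=\sum_{\mc{S}\in\mathbb{A}(k)}R_{k,\mc{S}}$ together with $\hat R_\mc{S}\ge\tilde R_\mc{S}\ge 0$ eliminates the auxiliary rates and yields the bound. The main obstacle is precisely this elimination step: without the canonical single-charging assignment provided by \eqref{Expansion}--\eqref{Expansion2}, one would face a union-bound explosion of redundant inequalities, so the combinatorial effort lies in verifying that the chosen $\pi$-dependent assignment extracts the tightest inequality and that the packing and covering pieces fuse into the compact entropic expression stated in the theorem.
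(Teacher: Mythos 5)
Your proposal is correct and follows essentially the same route as the paper: exhaustive splitting into submessages indexed by $\mathbb{P}(\mc{K})$, hierarchical subcodebooks with a Marton cushion whose existence of a jointly typical tuple is guaranteed by the hierarchical covering lemma, packing-lemma decoding constraints, and aggregation via the disjoint decomposition $\bigcup_{k\in\mc{T}}\mathbb{A}(k)=\bigsqcup_{i}\bm{\mathcal{B}}_\pi(i)$ with each $\bm{\mathcal{B}}_\pi(i)$ charged to receiver $\pi(i+1)$. The only piece you flag but do not carry out---verifying that this single-charging assignment yields the binding inequalities---is exactly what the paper supplies through its subadditivity/monotonicity observations on $I_{\mb{S}(i,\mc{J})}$ and Lemma~\ref{combine}.
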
 


\begin{proof}
	The achievable scheme is based  on an  exhaustive message  splitting and a modified Marton's coding. More specifically,  each message $M_k$, for $k\in\mc{K}$, is split into $\sum_{j=1}^K {K\choose j} $ submessages $M_k\triangleq\big( M_{k,\mc{S}}:\mc{S}\subseteq \mc{K},\, |\mc{S}|=1,2,\dots,K \big)$ with $M_{k,\mathcal{S}}\in\emptyset$ if $k\notin\mathcal{S}$.  The submessages $\textbf{M}_\mathcal{S}=(M_{k,\mathcal{S}}:k\in\mc{K})$ are encoded into a codeword $u^n_\mathcal{S}\sim p(u_{\mathcal{S}}|u(\mathbb{A}(\mc{S})\backslash\mc{S}))$ and will be decoded by receiver $j$ if $j\in{\mathcal{S}}$.  Then,  a modified Marton's coding is applied to send  $(2^{K}-1)$ codewords $(u^n_\mathcal{S}:\mathcal{S}\subseteq\mc{K},|\mathcal{S}|=1,\ldots,K)$ that are jointly typical with each other. 
  See more detailed proof in Section ~\ref{ACS}.		
\end{proof}

To ensure arbitrary input pmfs $ p({U(\mathbb{A})})$ in the inner bound above, we need to establish a sufficient condition on  the sizes of subcodebooks, which is related to the covering lemma. 
The  multivariate covering lemma in  \cite{nit} only consider sequences generated under simple dependence, e.g., $(u_1^n,\ldots,u_K^n)\sim p({u_1|u_0})p({u_1|u_0})\ldots p({u_K|u_0})$ given a sequence $u_0^n\sim P_{U_0}$. In our scheme there are  $(2^{K}-1)$ RVs, represented by $(U_\mathcal{S}:\mc{S}\in \mathbb{P}(\mc{K}),\mc{S}\neq \emptyset)$, and  each symbol of codeword $U^n_\mathcal{S}$ is generated conditionally independent according to $p(u_{\mathcal{S}}|u(\mathbb{A}(\mc{S})\backslash\mc{S}))$. At  first glance it seems overwhelming to derived the sufficient condition such that all transmitted codewords are jointly typical.  However, 
by  dividing all subcodebooks into $K$ levels and processing them recursively in a hierachical manner, we obtain the following new  lemma:

{
\begin{lem}[Hierarchical Covering Lemma]\label{hcl}
	Let $U(\mathbb{A})\thicksim p(u(\mathbb{A}))$ and 
	$\epsilon_{K}<\epsilon_{K-1}\cdots<\epsilon_{1}$. Let $U^{n}_{\mc{K}}\thicksim p\big(u^{n}_{\mc{K}}\big)$ be a random
	sequence with $\lim_{n\to\infty} P\big\{\,U^{n}_{\mc{K}}\in\mc{T}_{\epsilon_{K}}^{(n)}\,\}=1$.
	For each $l\in[1:K-1]$ and $\mc{S}\in\mathbb{A}^{l}$, let $U^{n}_\mc{S}(m_\mc{S})$, $m_S\in[1:2^{nr_\mc{S}}]$, be pairwise conditionally independent sequences, each distributed according to $\prod_{i=1}^{n}p_{U_{\mc{S}} | U(\mathbb{A}(\mc{S})\setminus\mc{S})}( u_{\mc{S},i}|u_{\mc{S}',i}:\mc{S}'\in(\mathbb{A}(\mc{S})\setminus\mc{S} ))$.
	Assume that $\big(U^{n}_\mc{S}(m_\mc{S}):m_S\in[1:2^{nr_\mc{S}}],\,\mc{S}\in\mathbb{A}^{l}\big)$ are mutually conditionally independent given $ (U^{n}_{\mc{S}}(m_{\mc{S}}):\mc{S}\in(\mathbb{A}(\mc{S})\setminus\mc{S} )$. Then there exists
	$\delta(\epsilon_{l})$ that tends to zero as $\epsilon_{l}\to$ 0 such that 
	\begin{align*}
		&\lim_{n\to\infty}P\Big\{\big(U^{n}_\mc{S}(m_\mc{S}):\mc{S}\in\mathbb{A}^{1:(K-1)},U^{n}_{\mc{K}}\big)\notin\mc{T}_{\epsilon_{1}}^{(n)}\\
		\,&\text{for all}~(m_\mc{S}:\mc{S}\in\mathbb{A}^{1:(K-1)})\in \!\!\!\! \prod_{\mc{S}\in\mathbb{A}^{1:(K-1)}} \!\!\!\![1:2^{nr_{\mc{S}}}] \Big\}=0,
	\end{align*}
	if
	\begin{equation}\label{eq:hclemma}
		\begin{aligned}
			\sum_{\mc{S}\in\mc{J}^{l}}r_{\mc{S}} > &\sum_{\mc{S}\in\mc{J}^{l}} H\Big(U_\mc{S}\mid U(\mathbb{A}(\mc{S})\setminus \mc{S})\Big)\\
			\mathrel{\phantom{>}}&- H\Big(U(\mc{J}^{l})\mid U\big(\mathbb{A}^{(l+1):K}\big)\Big)  + \delta(\epsilon_{l}),
		\end{aligned}
	\end{equation}
	for all $\mc{J}^{l}\subseteq \mathbb{A}^{l}$ and $l\in[1:(K-1)]$.
\end{lem}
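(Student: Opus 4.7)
The plan is to establish the Hierarchical Covering Lemma by a backward induction on the level index $l$, descending from $l=K-1$ down to $l=1$, and invoking the standard multivariate covering lemma (cf.\ \cite{nit}) at each step conditioned on the sequences already secured at higher layers. The recursion mirrors the hierarchical codebook structure: once we have a jointly typical tuple at layers $l+1,\ldots,K$, the level-$l$ codewords are generated conditionally on exactly those layers, so a single layer-wise application of the multivariate covering lemma suffices.

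For the base case $l=K-1$, the only parent sequence is $U^n_{\mathcal{K}}$, which by hypothesis lies in $\mathcal{T}_{\epsilon_K}^{(n)}$ with probability approaching one. Since $\mathbb{A}(\mc{S})\setminus\mc{S}=\{\mc{K}\}$ for every $\mc{S}\in\mathbb{A}^{K-1}$, the codewords $U^n_{\mc{S}}(m_{\mc{S}})$ are pairwise conditionally independent given $U^n_{\mc{K}}$ and drawn from $\prod_i p_{U_{\mc{S}}|U_{\mc{K}}}$. A direct application of the multivariate covering lemma then shows that, whenever \eqref{eq:hclemma} holds at $l=K-1$, one can select indices $(m_{\mc{S}}:\mc{S}\in\mathbb{A}^{K-1})$ rendering the combined tuple $\epsilon_{K-1}$-typical with probability $1-o(1)$.

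For the inductive step, assume indices $(m_{\mc{S}}:\mc{S}\in\mathbb{A}^{(l+1):(K-1)})$ have already been secured so that $(U^n_{\mc{S}}(m_{\mc{S}}):\mc{S}\in\mathbb{A}^{(l+1):K})$ is $\epsilon_{l+1}$-typical, treating $U^n_{\mc{K}}$ as a degenerate codeword. The key observation is that, by the mutual conditional independence postulated in the lemma, the level-$l$ codewords are independent of the random index selection made at higher levels once one conditions on the chosen higher-level sequences themselves. Consequently, conditioned on those sequences, each level-$l$ codeword continues to be distributed according to the prescribed product law $\prod_i p_{U_{\mc{S}}|U(\mathbb{A}(\mc{S})\setminus\mc{S})}$. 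Because $\mathbb{A}(\mc{S})\setminus\mc{S}\subseteq\mathbb{A}^{(l+1):K}$ for every $\mc{S}\in\mathbb{A}^l$, a second application of the multivariate covering lemma with base $U(\mathbb{A}^{(l+1):K})$ yields the desired $\epsilon_l$-typical extension whenever \eqref{eq:hclemma} holds at level $l$. A union bound over the $K-1$ levels then produces the stated conclusion.

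The main obstacle, I expect, is the conditional-independence bookkeeping in the inductive step: one must verify carefully that, after random indices have been fixed at higher layers, the marginal laws of the level-$l$ sequences are still the ones required to legitimately invoke the multivariate covering lemma, and that typicality propagates faithfully from one layer to the next. Quantitatively, this calls for a Markov/conditional typicality argument showing that the freshly drawn level-$l$ sequences concentrate on $\mathcal{T}^{(n)}_{\epsilon_l}$ given the previously selected $\epsilon_{l+1}$-typical tuple, which is precisely why the strict ordering $\epsilon_K<\epsilon_{K-1}<\cdots<\epsilon_1$ is imposed—each layer absorbs an additive $\delta(\epsilon_l)$ slack before the next layer is covered, preventing the cumulative error from blowing up across the $K-1$ recursive applications.
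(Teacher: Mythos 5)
Your outer architecture---peeling the levels backward from $l=K-1$ to $l=1$, conditioning each level on an already-secured typical tuple from the higher levels, and union-bounding over the $K-1$ levels with the nested tolerances $\epsilon_K<\cdots<\epsilon_1$---is exactly the paper's strategy. The gap is in the step you treat as routine: the claim that, once the higher layers are fixed, a single layer-wise application of the standard multivariate covering lemma suffices. It does not, and this is precisely the point of the new lemma. The multivariate covering lemma of \cite{nit} requires all candidate codewords to be generated from the true conditional law given one \emph{common} conditioning sequence $U_0^n$. At level $l$ of this construction, the codewords $U^n_{\mc{S}}(m_{\mc{S}})$ for different $\mc{S}\in\mathbb{A}^{l}$ are each drawn from $\prod_i p_{U_{\mc{S}}\mid U(\mathbb{A}(\mc{S})\setminus\mc{S})}$, i.e., each conditions on a \emph{different, proper} subset of the higher-level sequences rather than on the whole tuple $U(\mathbb{A}^{(l+1):K})$. (For $K=3$ and $l=1$: $U_1^n$ is generated given $(U_{12}^n,U_{13}^n,U_{123}^n)$ but not $U_{23}^n$, while $U_2^n$ is generated given $(U_{12}^n,U_{23}^n,U_{123}^n)$ but not $U_{13}^n$.) This mismatch between the generation law and the target joint pmf is why the rate condition \eqref{eq:hclemma} pits the per-$\mc{S}$ entropies $H(U_{\mc{S}}\mid U(\mathbb{A}(\mc{S})\setminus\mc{S}))$ against the joint entropy conditioned on all of $\mathbb{A}^{(l+1):K}$, and why even singleton sets $\mc{J}^{l}=\{\mc{S}\}$ yield non-trivial constraints $r_{\mc{S}}>H(U_{\mc{S}}\mid U(\mathbb{A}(\mc{S})\setminus\mc{S}))-H(U_{\mc{S}}\mid U(\mathbb{A}^{(l+1):K}))$---something the off-the-shelf lemma never produces, since there the singleton condition is vacuous.

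To close the gap you must redo the covering argument at each level rather than cite it. The paper does this with a second-moment (Chebyshev) bound on the number $|\mc{A}_l|$ of level-$l$ index tuples that are jointly typical with the secured higher-level tuple: writing $|\mc{A}_l|$ as a sum of indicators, it bounds $P\{|\mc{A}_l|=0\}\le \mathrm{Var}(|\mc{A}_l|)/(\mathrm{E}|\mc{A}_l|)^2$, and the pair probabilities $p_{\mc{J}^{l}}$ for two index tuples agreeing exactly on $\mc{J}^{l}$ are estimated using the actual generation law $\prod_i p_{U_{\mc{S}}\mid U(\mathbb{A}(\mc{S})\setminus\mc{S})}$; this is where the factors of the form $2^{-n\sum_{\mc{S}}H(U_{\mc{S}}\mid U(\mathbb{A}(\mc{S})\setminus\mc{S}))}$ enter and produce the modified rate condition. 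Your recursion over levels and your conditional-independence bookkeeping for propagating typicality downward are consistent with the paper; the missing content is this per-level second-moment computation, which cannot be delegated to the existing lemma.
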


\begin{proof}
	 See the  proof  in Appendix~\ref{prf:hcl}.
\end{proof}
}
\begin{lem}\label{combine}
	Consider the inequality condition in hierarchical covering lemma:
	\begin{align*}
		\sum_{\mc{S}\in\mc{J}^{l}}r_{\mc{S}} >&  \sum_{\mc{S}\in\mc{J}^{l}} H\big(U_\mc{S}\mid U(\mathbb{A}(\mc{S})\setminus \mc{S})\big)  \\
		&
		- H\big(U(\mc{J}^{l})\mid U\big(\mathbb{A}^{(l+1):K}\big)\big) + \delta(\epsilon).
	\end{align*}
	If the set $\mc{J}^l$ is split into $N\in [\,1:|\mc{J}^l|\,]$ disjoint pieces $ \big\{  \mc{J}^l_1,\mc{J}^l_2,\ldots,\mc{J}^l_N  \} $ satisfying
	\begin{IEEEeqnarray}{rCl}\label{Jsplit}
	\mc{J}^l=\bigcup_{i=1}^N \mc{J}^l_i,~\textnormal{with}~\mc{J}^l_i\cap \mc{J}^l_j=\emptyset, ~\textnormal{if}~i\neq j,
\end{IEEEeqnarray}  
	 then for all $ i\in[N]$, we have
	\begin{align*}
		\sum_{\mc{S}\in\mc{J}_i^{l}}r_{\mc{S}} > &
		\sum_{\mc{S}\in\mc{J}_i^{l}} H\big(U_\mc{S}\mid U(\mathbb{A}(\mc{S})\setminus \mc{S})\big) \\  	
		&- H\big(U(\mc{J}_i^{l})\mid U\big(\mathbb{A}^{(l+1):K}\big)\big) + \delta_{\mc{J}_i^{l}}(\epsilon).
	\end{align*}
	With simple justification, the sum of lower bounds for $\sum_{\mc{S}\in\mc{J}_i^{l}}r_{\mc{S}},\,\forall i\in [N]$ is not greater than the lower bound for $\sum_{\mc{S}\in\mc{J}^{l}}r_{\mc{S}}$, which means combinations of split inequalities will not induce a smaller region bounded by the overall inequalities that simultaneously restrain all variables, 
\end{lem}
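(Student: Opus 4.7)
My plan is to treat the lemma as two distinct claims and handle them in sequence. The first claim is that, for each split piece $\mc{J}_i^l$, the rate inequality on $\sum_{\mc{S}\in\mc{J}_i^{l}}r_{\mc{S}}$ is valid; the second (and substantive) claim is that the sum of the per-piece lower bounds is no larger than the lower bound obtained for $\sum_{\mc{S}\in\mc{J}^{l}}r_{\mc{S}}$.

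For the first claim, I would simply re-invoke Lemma~\ref{hcl}. That lemma is stated for \emph{every} subset $\mc{J}^{l}\subseteq\mathbb{A}^{l}$, and since each $\mc{J}_i^l$ is itself such a subset, substituting $\mc{J}_i^l$ for $\mc{J}^l$ in \eqref{eq:hclemma} immediately yields the asserted inequality, with some piece-dependent $\delta_{\mc{J}_i^l}(\epsilon)$. No new coding argument is required.

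For the second claim, I would exploit the disjointness $\mc{J}^l=\bigsqcup_{i=1}^N\mc{J}_i^l$ to split the single-letter entropy sum exactly, observing that
\[
\sum_{i=1}^N\sum_{\mc{S}\in\mc{J}_i^l} H\bigl(U_\mc{S}\mid U(\mathbb{A}(\mc{S})\setminus\mc{S})\bigr)=\sum_{\mc{S}\in\mc{J}^l} H\bigl(U_\mc{S}\mid U(\mathbb{A}(\mc{S})\setminus\mc{S})\bigr).
\]
Consequently, the desired inequality reduces to
\[
\sum_{i=1}^N H\bigl(U(\mc{J}_i^l)\,\big|\,U(\mathbb{A}^{(l+1):K})\bigr)\;\geq\;H\bigl(U(\mc{J}^l)\,\big|\,U(\mathbb{A}^{(l+1):K})\bigr),
\]
which follows from the chain rule and ``conditioning reduces entropy.'' Fixing an arbitrary ordering and writing $U(\mc{J}^l)=(U(\mc{J}_1^l),\ldots,U(\mc{J}_N^l))$, the chain rule expands the right-hand side as $\sum_{i=1}^N H(U(\mc{J}_i^l)\mid U(\mathbb{A}^{(l+1):K}),U(\mc{J}_1^l),\ldots,U(\mc{J}_{i-1}^l))$, and each term there is upper-bounded by $H(U(\mc{J}_i^l)\mid U(\mathbb{A}^{(l+1):K}))$.

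I do not foresee any genuine obstacle: the heart of the argument is just the subadditivity of conditional entropy across a disjoint partition, exactly as the paper's ``simple justification'' suggests. The only bookkeeping step is absorbing the piece-specific $\delta_{\mc{J}_i^l}(\epsilon)$ into a single generic $\delta(\epsilon)$ via the maximum convention introduced in the notation section, which is immediate.
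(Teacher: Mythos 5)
Your proposal is correct and follows essentially the same route as the paper: the disjointness of the $\mc{J}_i^l$ splits the sum $\sum_{\mc{S}\in\mc{J}^l}H(U_\mc{S}\mid U(\mathbb{A}(\mc{S})\setminus\mc{S}))$ exactly, and the comparison of the joint-entropy terms reduces to the chain rule plus ``conditioning reduces entropy,'' which is precisely steps (a) and (b) in the paper's proof. Your explicit invocation of Lemma~\ref{hcl} for each piece and the remark on absorbing the $\delta_{\mc{J}_i^l}(\epsilon)$'s are harmless additions that the paper leaves implicit.
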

\begin{proof}
{It's equivalent to prove that 
	\begin{IEEEeqnarray}{rCl}\label{upperbound}
	&& \sum_{\mc{S}\in\mc{J}^{l}} H\big(U_\mc{S}\mid U(\mathbb{A}(\mc{S})\setminus \mc{S})\big) \nonumber\\
				&&\quad - H\big(U(\mc{J}^{l})\mid U\big(\mathbb{A}^{(l+1):K}\big)\big)   \nonumber \\
				&&\stackrel{(a)}=\sum_{i=1}^N\sum_{\mc{S}\in\mc{J}_i^{l}} \Big(H\big(U_\mc{S}\mid U(\mathbb{A}(\mc{S})\setminus \mc{S})\big) \nonumber\nonumber\\
				&&\quad\quad - H\big(U(\mc{J}_i^{l})\mid U\big(\mathbb{A}^{(l+1):K}\big), U(\mc{J}_{1}^{l}),\ldots,U(\mc{J}_{i-1}^{l})\big)\Big)    \nonumber\\
	&&\stackrel{(b)}\geq	\sum_{i=1}^N \Big( \sum_{\mc{S}\in\mc{J}_i^{l}} H\big(U_\mc{S}\mid U(\mathbb{A}(\mc{S})\setminus \mc{S})\big)\nonumber\\
				&&\quad\quad\quad\quad- H\big(U(\mc{J}_i^{l})\mid U\big(\mathbb{A}^{(l+1):K}\big)\big)  \Big), \nonumber
\end{IEEEeqnarray}
where (a) follows by the chain rule of entropy and \eqref{Jsplit}; (b) holds because conditioning reduces entropy.
	}
\end{proof}


\begin{cor}\label{corfor3}
Define 
	\begin{IEEEeqnarray*}{rCl}
		&&I_{12}:=\min_{i\in\{1,2\}}\{I(U_{12} U_{123}; Y_i)\},\\
	&&I_{13}:=\min_{i\in\{1,3\}}\{I(U_{13} U_{123}; Y_i)\},\\
	&&I_{23}:=\min_{i\in\{2,3\}}\{I(U_{23} U_{123}; Y_i)\},
	\end{IEEEeqnarray*}
	and \begin{IEEEeqnarray*}{rCl}
		&&\Delta
		:=I(U_2;U_{13}|U_{12}U_{23}U_{123})+I(U_3;U_{12}|U_{13}U_{23}U_{123})\\
		&&+I(U_1;U_3|U_2 U_{12} U_{13}U_{23} U_{123} )+I(U_1;U_{23}|U_{12}U_{13}U_{123})\\
		&&+I(U_1;U_2|U_{12}U_{13}U_{23}U_{123})+I(U_2;U_3|U_{12}U_{13}U_{23}U_{123}).
\end{IEEEeqnarray*}

	A rate region $ (R_{1}, R_{2}, R_{3}) $ is achievable for 3-receiver DM-BC $ p(y_{1}, y_{2}, y_{3}|x) $ if 
	\begin{subequations}
	\begin{IEEEeqnarray}{rCl}\label{threeuser}
	R_1 ~&&< I(U_1 U_{12} U_{13} U_{123}; Y_1 )-I(U_1; U_{23}|U_{12} U_{13} U_{123}),\quad\\
		R_2 ~&&< I(U_2 U_{12} U_{23} U_{123}; Y_2 )-I(U_2; U_{13}|U_{12} U_{23} U_{123}),\\
		R_3 ~&&< I(U_3 U_{13} U_{23} U_{123}; Y_3 )-I(U_3; U_{12}|U_{13} U_{23} U_{123}),\quad\quad\\
		R_1~&&+~R_2 < I_{12}+I(U_1  U_{13}; Y_1| U_{12}U_{123})\nonumber\\
		&&\ \quad + I(U_2  U_{23}; Y_2| U_{12}U_{123})-I(U_2;U_{13}| U_{13} U_{23}U_{123})\nonumber\\
		&&\ \quad-I(U_{13};U_{23}| U_{12}U_{123})-I(U_1;U_{23}| U_{12} U_{13}U_{123})\nonumber\\
		&&\ \quad-I(U_1;U_2| U_{12} U_{13} U_{23}U_{123}),\label{eq12}
		\end{IEEEeqnarray}
		\begin{IEEEeqnarray}{rCl}
		R_1 ~&&+~R_3 < I_{13}+I(U_1  U_{12}; Y_1| U_{13}U_{123})\nonumber\\
		&&\ \quad+ I(U_3  U_{23}; Y_3| U_{13}U_{123})-I(U_{12};U_{23}| U_{13}U_{123})\nonumber\\
		&&\ \quad -I(U_3;U_{12}| U_{13} U_{23}U_{123}-I(U_1;U_{23}| U_{12} U_{13}U_{123})\nonumber\\
		&&\ \quad-I(U_1;U_3| U_{12} U_{13} U_{23}U_{123})),\\		
		R_2 ~&&+~R_3 < I_{23}+I(U_2  U_{12}; Y_2| U_{23}U_{123})\nonumber\\
		&&\  \quad+ I(U_3  U_{13}; Y_3| U_{23}U_{123})-I(U_{12};U_{13}| U_{23}U_{123})\nonumber\\
		&&\ \quad-I(U_3;U_{12}| U_{13} U_{23}U_{123})-I(U_2;U_{13}| U_{12} U_{13}U_{123})\nonumber\\
		&&\ \quad-I(U_2;U_3| U_{12} U_{13} U_{23}U_{123}),\\		
		R_1 ~&&+~R_2+R_3 <I_{12} + I(U_1 U_{13}; Y_1| U_{12}U_{123})\nonumber\\
		&&\ \quad+ I(U_2 U_{23}; Y_2| U_{12}U_{123})+I(U_3;Y_3| U_{13} U_{23}U_{123})\nonumber\\
		&&\ \quad-I(U_{13}U_{23}| U_{12}U_{123})- \Delta,
		\\
		R_1 ~&&+~R_2+R_3 < I_{13} + I(U_1 U_{12}; Y_1| U_{13}U_{123})\nonumber\\
		&&\ \quad+ I(U_3 U_{23}; Y_3| U_{13}U_{123})+I(U_2;Y_2| U_{12} U_{23}U_{123})\nonumber\\
		&&\ \quad-I(U_{12}U_{23}| U_{13}U_{123})- \Delta,
		\\
		R_1 ~&&+~R_2+R_3 <I_{23} + I(U_2 U_{12}; Y_2| U_{23}U_{123})\nonumber\\
		&&\ \quad+ I(U_3 U_{13}; Y_3| U_{23}U_{123})+I(U_1;Y_1| U_{12} U_{13}U_{123})\nonumber\\
		&&\ \quad-I(U_{12}U_{13}| U_{23}U_{123})- \Delta,
\end{IEEEeqnarray}
\end{subequations}
for some pmf $p(u_1u_2u_3u_{12}u_{13} u_{23}u_{123})$ and a function $x=f(u_1u_2u_3u_{12}u_{13} u_{23}u_{123})$.
	
\end{cor}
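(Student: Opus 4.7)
The plan is to specialize Theorem~\ref{mainthm} to $K=3$ and evaluate the right-hand side of \eqref{eq:rateregion} explicitly in the seven auxiliary RVs $U_1, U_2, U_3, U_{12}, U_{13}, U_{23}, U_{123}$ for every pair $(\mc{T},\pi)$ with $\mc{T}\subseteq\{1,2,3\}$ and $\pi\in\Pi_{\mc{T}}$.

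First I would tabulate, for each non-empty $\mc{S}\in\mathbb{P}(\{1,2,3\})$, the ancestor set $\mathbb{A}(\mc{S})\setminus\mc{S}$, and for each $\pi$, the disjoint pieces $\bm{\mathcal{B}}_\pi(i)$ of the decomposition \eqref{Expansion}. For instance, for $\mc{T}=\{1,2\}$ and $\pi=(1,2,3)$ one has $\bm{\mathcal{B}}_\pi(0)=\{\{1\},\{1,2\},\{1,3\},\{1,2,3\}\}$ and $\bm{\mathcal{B}}_\pi(1)=\{\{2\},\{2,3\}\}$, while the permutation $(2,1,3)$ swaps the roles of $1$ and $2$. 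The layer sets $\mathbb{A}^l$ and $\mathbb{A}^{l+1:K}$ needed for the second sum are read off immediately from the cardinality partition of $\mathbb{P}(\{1,2,3\})$.

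Next I would handle the three cardinalities of $\mc{T}$ in turn. For $|\mc{T}|=1$, only $\pi=(k)$ is relevant; the positive conditional entropies $H(U_{\mc{S}'}\mid U(\mathbb{A}(\mc{S}')\setminus\mc{S}'))$ for $\mc{S}'\in\mathbb{B}_k(\mc{K})$, together with the layer-$2$ correction $-I(U_{\{k,i\}};U_{\{k,j\}}\mid U_{\{1,2,3\}})$, telescope through the chain rule into a single joint entropy $H(U_k\,U_{\{k,i\}}\,U_{\{k,j\}}\,U_{\{1,2,3\}})$, so that subtracting its conditional counterpart given $Y_k$ produces $I(U_k\,U_{\{k,i\}}\,U_{\{k,j\}}\,U_{\{1,2,3\}};Y_k)$; adding the layer-$1$ correction $-I(U_k;U_{\mc{K}\setminus\{k\}}\mid U_{\{k,i\}}U_{\{k,j\}}U_{\{1,2,3\}})$ then yields the first three inequalities of \eqref{threeuser}. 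For $|\mc{T}|=2$ the two permutations of $\mc{T}=\{i,j\}$ produce, after the same chain-rule regrouping, bounds that differ only in the mutual-information term $I(U_{\{i,j\}}U_{\{1,2,3\}};Y_{\pi(1)})$ attached to the common level-$2$ and level-$3$ auxiliaries; the pointwise tighter of the two is exactly $I_{ij}$, and the remaining terms reassemble into the right-hand side of the pair bounds. The case $|\mc{T}|=3$ extends this pattern over the six permutations of $\{1,2,3\}$: each of the three pairs $\{(i,j,k),(j,i,k)\}$ contributes one sum-rate inequality (with $I_{ij}$ after taking the tighter), while all remaining cross mutual informations between the $U_\mc{S}$'s aggregate into the single quantity $\Delta$.

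The main obstacle is pure bookkeeping: for $|\mc{T}|=3$, \eqref{eq:rateregion} expands to well over a dozen conditional entropies, and showing that they regroup exactly as the corollary requires applying $H(A\mid B)-H(A\mid B,C)=I(A;C\mid B)$ to every positive/negative pair that share an argument, in a total ordering of $\mathbb{P}(\{1,2,3\})$ consistent with the chain rule. Where distinct permutations of the same $\mc{T}$ yield different inequalities, only the pointwise tightest is active, so keeping one representative per $I_{ij}$ loses nothing and gives the compact form stated in the corollary; Lemma~\ref{combine} provides a convenient justification for such reductions.
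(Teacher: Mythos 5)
Your proposal follows essentially the same route as the paper: specialize Theorem~\ref{mainthm} to $K=3$, compute the sets $\bm{\mathcal{B}}_\pi(i)$ and $\mathbb{A}^{l}$ for each $(\mc{T},\pi)$, regroup the conditional entropies by the chain rule into mutual informations, and observe that the two (or, for $|\mc{T}|=3$, the paired) permutations of each $\mc{T}$ yield bounds differing only in $I(U_{ij}U_{123};Y_i)$ versus $I(U_{ij}U_{123};Y_j)$, whose combination is the minimum $I_{ij}$. This matches the paper's (terser) derivation, so the proposal is correct and takes the same approach; the only quibble is that the final reduction to the pointwise-tightest bound is elementary logic rather than an application of Lemma~\ref{combine}, which concerns the covering-rate conditions.
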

\begin{proof}
	The result directly comes from Theorem~\ref{mainthm} with $ K=3 $. 
	For example, for $\mc{T}=\{1,2\}$ and $ \pi=(1,2) $, in  \eqref{eq:rateregion}, then we have $\mathbb{A}^{K}=\{123\}$,	$\mathbb{A}^{2:K}=\{12,13,23,123\}$, and 
,\begin{IEEEeqnarray*}{rCl}
\bm{\mathcal{B}}_\pi(0)&=&\mathbb{B}_{1} \big(  123  \big)
=\{123,12,13,1\},\\~\bm{\mathcal{B}}_\pi(1)&=&\mathbb{B}_{2} \big(  23  \big)=\{23,2\},
\end{IEEEeqnarray*}
which satisfy the decomposition $\cup_{i\in\{1,2\}}\mathbb{A}(i)= \{1, 12,13,23,123\} =\bm{\mathcal{B}}_\pi(0)\sqcup \bm{\mathcal{B}}_\pi({1})$.  With  $ \pi=(1,2) $ and $ \pi=(2,1) $, we  obtain  upper bound for $ R_{1}+R_{2} $ in~\eqref{eq12}. The remaining rate constraints are acquired similarly.
\end{proof}

\begin{rem}[Comparison with superposition coding and Marton's coding]
	Our achievable region generalizes that introduced by standard Marton's coding and superposition coding, i.e., both of them are special cases of our general-form intended coding. Furthermore, the rate region in Theorem~\ref{mainthm} contains the regions resulted from the two aforementioned coding schemes.
	\begin{itemize}
		\item Our coding degenerates into Marton's coding by setting $ U_{\mc{S}}=const $, $ \forall \mc{S}\in \mathbb{A}^{2:(K-1)} $, which indicates our rate region contains that derived by Marton's coding;
		\item Since superposition coding is optimal for degraded/less noisy/more capable DM-BC, which is barely guaranteed in many cases, i.e., without knowing the concrete relation according to the Markov chain, our rate region contains that derived from superposition coding. 
	\end{itemize}
		In the future work, we will evaluate our rate region for some specific  DM-BCs to show that our coding scheme strictly improves previously known inner bounds.
\end{rem}


\section{Achievable coding scheme for Theorem~\ref{mainthm}}\label{ACS}
We first present our scheme for  3-receiver DM-BC as an illustration, and then extend it to general $K$-receiver DM-BC model for $K\geq 2$.

 {To simplify notations, we denote $ \mc{C}_{\mc{S}}(\textbf{m}_\mc{S}) $ by $ \mc{C}_{\mc{S}}(\textbf{m}) $ with cognition of the subscript of $ \textbf{m} $ from $ \mc{C}_{\mc{S}} $, similarly for $ U^{n}_{\mc{S}}(\textbf{m}_{\mc{S}},l_{\mc{S}}) $ denoted by $ U^{n}_{\mc{S}}(\textbf{m},l) $ and its realization $ u^{n}_{\mc{S}}(\textbf{m}_{\mc{S}},l_{\mc{S}}) $ denoted by $ u^{n}_{\mc{S}}(\textbf{m},l) $. 
}
\subsection{ Coding scheme for 3-receiver DM-BCs}\label{example}


\subsubsection{Rate splitting} 
Divide $M_j\in[1:2^{R_j}]$, $ j \in \{1,2,3\} $, into four independent messages $(M_{j,\mc{S}}\in[1:2^{R_{j,\mc{S}}}]$: $ \mc{S}\in \mathbb{A}(j) )$. 
Hence, $R_j = \sum\limits_{\mc{S}\in\mathbb{A}(j)}R_{j,\mc{S}}$. More precisely
\[
\begin{split}
	M_1 & = (M_{1,123}, M_{1,12}, M_{1,13}, M_{1,1}),
	\\
	M_2 &= (M_{2,123}, M_{2,12}, M_{2,23}, M_{2,2}),
	\\
	M_3 &= (M_{3,123}, M_{3,23}, M_{3,13}, M_{3,3}), 
	\\
	R_1 & = R_{1,123}+R_{1,12}+R_{1,13}+R_{1,1},
	\\
	R_2 &= R_{2,123}+R_{2,12}+R_{2,23}+R_{2,2},
	\\
	R_3 &= R_{3,123}+R_{3,23}+R_{3,13}+R_{3,3}.
\end{split}
\]
For convenience, let  $\bm{R}_{123}\triangleq R_{1,123}+R_{2,123}+R_{3,123}$, $\bm{R}_{12} \triangleq R_{1,12}+R_{2,12}$, $\bm{R}_{13}\triangleq R_{1,13}+R_{3,13}$, $\bm{R}_{23}\triangleq R_{2,23}+R_{3,23}$, $\bm{R}_{1}\triangleq R_{1,1}$, $\bm{R}_{2}\triangleq R_{2,2}$, $\bm{R}_{3}\triangleq R_{3,3}$ and $\textbf{M}_{123}\triangleq (M_{1,123}, M_{2,123}, M_{3,123})$, $\textbf{M}_{12}\triangleq (M_{1,12}, M_{2,12})$, $\textbf{M}_{23} \triangleq  (M_{2,23}, M_{3,23})$, $\textbf{M}_{13}\triangleq (M_{1,13}, M_{3,13})$, $\textbf{M}_{1}\triangleq M_{1,1}$, $\textbf{M}_{2}\triangleq M_{2,2}$, $\textbf{M}_{3}\triangleq M_{3,3}$.
\subsubsection{Codebook generation} 
Fix a pmf $p(u_{123}, u_{12}, u_{23}, u_{13}, $  $u_{1}, u_{2}, u_{3})$. Randomly and independently generate $2^{n\bm{R}_{123}}$ sequences $u_{123}^n(\textbf{m})$ each according to $\prod_{i = 1}^np_{U_{123}}(u_{123,i})$. 
For each $\textbf{m}_{12} \in \prod_{i=1}^{3}[1:2^{nR_{i,12}} ]$ generate a subcodebook $\mathcal{C}_{12}(\textbf{m})$ consisting of $2^{n(\tilde{\bm{R}}_{12}-\bm{R}_{12})}$ independent generated sequences $u_{12}^n(\textbf{m}, {l})$, ${l}_{12} \in [ 1 : 2^{n(\tilde{\bm{R}}_{12}-\bm{R}_{12})}]$ , each according to $\prod_{i=1}^n p_{U_{12}|U_{123}}(u_{12,i}|u_{123,i})$.
In the same way , for each $\textbf{m}_{23}$ and $\textbf{m}_{13}$, generate subcodebooks $\mathcal{C}_{23}(\textbf{m})$ and $\mathcal{C}_{13}(\textbf{m})$, respectively. For each $\textbf{m}_{1} \in [1: 2^{nR_{1,1}} ]$ generate a subcodebook $\mathcal{C}_{1}(\textbf{m})$ consisting of $2^{n(\tilde{\bm{R}}_{1}-\bm{R}_{1})}$ independent generated sequences $u_{1}^n(\textbf{m}, {l})$, ${l}_{1} \in [ 1 : 2^{n(\tilde{\bm{R}}_{1}-\bm{R}_{1})}]$ , each according to $\prod_{i=1}^n p_{U_{1}|U_{12}U_{13}U_{123}}(u_{1,i}|u_{12,i}u_{13,i}u_{123,i})$. For $ \textbf{m}_{2} $ and $ \textbf{m}_{3} $, generate corresponding subcodebooks similarly.

\subsubsection{Encoding}
For each $\textbf{m}_{\mc{S}}$, $ \mc{S}\in\{1,2,3,12,13, 23\} $, find an index tuple $({l}_{1}, {l}_{2},  {l}_{3},  {l}_{12},  {l}_{13},  {l}_{23} )$ such that $u_{\mc{S}}^n(\textbf{m}, {l})\in \mc{C}_{\mc{S}}(\textbf{m}), \mc{S} \in \{1,2,3,12,13, 23\}$, and 
\begin{align}
	(u_{\mc{S}}^n(\textbf{m}, l):\mc{S} \in \{1,2,3,12,13, 23\}) \in \mathcal{T}_{\epsilon'}^{(n)}
\end{align}
If more than one such index tuple can be found, choose an arbitrary one among those. If no such tuple exists, just choose $({l}_{1}, {l}_{2},  {l}_{3},  {l}_{12},  {l}_{13},  {l}_{23} )=(1,1,1,1,1,1)$. 
Then the transmitter  generates $x^n(\textbf{m}_{123}, \textbf{m}_{12}, \textbf{m}_{13}, \textbf{m}_{23},m_{1,1}, m_{2,2}, m_{3,3})$ as
\begin{align*}
	x_i= x(&u_{123,i}(\textbf{m}), u_{12,i}(\textbf{m}, l),u_{13,i}(\textbf{m}, l), u_{23,i}(\textbf{m}, l),\\
	&u_{1,i}(\textbf{m}, l), u_{2,i}(\textbf{m}, l)),u_{3,i}(\textbf{m}, l))),~\text{for}~i=1,\ldots,n.
\end{align*}

\subsubsection{Decoding}
Decoder $j=1$ declares that $(\hat{m}_{1,123},\hat{m}_{1,12},$  $ \hat{m}_{1,13}, \hat{m}_{1,1})$ is sent if it is the unique message such that 
\[
\begin{split}
	(u^n_{123}(\hat{\textbf{m}}), u^n_{\mc{S}}(\hat{\textbf{m}},l):\mc{S}\in\{1,12,13\}, y_1^n) \in \mathcal{T}_{\epsilon}^{(n)}.
\end{split}
\]

Similarly, receiver $j = 2,3$ uses joint typicality decoding to find the unique message 
$(\hat{m}_{2,123},\hat{m}_{2,12}, \hat{m}_{2,23}, \hat{m}_{2,2})$ and $ (\hat{m}_{3,123},\hat{m}_{3,13}, \hat{m}_{3,23}, \hat{m}_{3,3})$, respectively.

\subsubsection{Analysis of the probability of error}
Assume without loss of generality that 
\[ 
\begin{split}
	(\textbf{M}_{123}, \textbf{M}_{12},\textbf{M}_{13},\textbf{M}_{23}, \textbf{M}_1,  \textbf{M}_2, \textbf{M}_3) =(\textbf{1}, \textbf{1}, \textbf{1}, \textbf{1},  1,  1, 1)
\end{split}
\]is sent and let $L_{12}, L_{13},L_{23}, L_{1}, L_{2}, L_{3}$ be the index tuple of selected sequences 
\[
\begin{split}
	&\big(U_{123}^n(\textbf{1}),U_{12}^n(\textbf{1},L),U_{13}^n(\textbf{1},L), U_{23}^n(\textbf{1},L),\\
	&\quad U_1^n(L), U_2^n( L),U_3^n(L)\big)\\
	&\quad \in \mathcal{C}_{12}(\textbf{1})\times\mathcal{C}_{13}(\textbf{1})\times\mathcal{C}_{23}(\textbf{1})\times\mathcal{C}_{1}(1)\times\mathcal{C}_{2}(1)\times\mathcal{C}_{3}(1).
\end{split}
\]
{We note that the subcodebook $\mathcal{C}_\mc{S}(\textbf{m})$ consists of $2^{n(\tilde{\bm{R}}_\mc{S}-\bm{R}_\mc{S})}$ i.i.d. $U^n_{\mc{S}}(\textbf{m},l)$ sequences, $\forall \mc{S}\in \{{1}, {2},  {3},  {12},  {13},  {23} \}$. By the hierarchical covering lemma (with $r_\mc{S} = \tilde{\bm{R}}_\mc{S}-\bm{R}_\mc{S}$),  we obtain a set of constraints for $ \mc{S}\in\{12,13,23\}$
\begin{subequations}\label{covering01}
	\begin{align}
		{r}_{12}+{r}_{13} >&
		I(U_{12};U_{13}|U_{123}),\\
		{r}_{12}+{r}_{23} >& 
		I(U_{12}; U_{23}|U_{123}),\\
		{r}_{13}+{r}_{23} >& 
		I(U_{13}; U_{23}|U_{123}),\\
		{r}_{12}+{r}_{13}+{r}_{23}>&I(U_{12};U_{23}|U_{123})\nonumber\\&\hspace{-12ex}+I(U_{13};U_{23}|U_{123})+I(U_{12};U_{13}|U_{123},U_{23}).
	\end{align}
\end{subequations}
Again, using hierarchical covering lemma for $ \mc{S}\in\{1,2,3\} $, we attain the region for $ {r}_{1}, {r}_{2}, {r}_{3}$:
\begin{subequations}\label{covering02}
	\begin{align}
		&{r}_1 > I(U_1;U_{23}|U_{123}U_{12}U_{13}),
		\\
		&{r}_2>I(U_2;U_{13}|U_{123}U_{12}U_{23}),
		\\
		&{r}_3>I(U_3;U_{12}|U_{123}U_{13}U_{23}),\\
		&{r}_1+{r}_2>I(U_1;U_2|U_{123}U_{12}U_{13}U_{23})\nonumber\\
		&+I(U_1;U_{23}|U_{123}U_{12}U_{13})\!+\!I(U_2;U_{13}|U_{123}U_{23}U_{12}),\\
		&{r}_1+{r}_3>I(U_1;U_3|U_{123}U_{12}U_{13}U_{23})\nonumber\\
		&+ I(U_1;U_{23}|U_{123}U_{12}U_{13}) + I(U_3;U_{12}|U_{123}U_{13}U_{23}),\\
		&{r}_2+{r}_3>I(U_2;U_3|U_{123}U_{12}U_{13}U_{23})\nonumber\\
		&+ I(U_2;U_{13}|U_{123}U_{12}U_{23})+I(U_3;U_{12}|U_{123}U_{13}U_{23}),\\
		&{r}_1+{r}_2+{r}_3> I(U_1;U_2|U_{123}U_{12}U_{23}U_{13})\nonumber\\
		&+ I(U_2;U_3|U_{123}U_{12}U_{13}U_{23})+I(U_3;U_{12}|U_{123}U_{13}U_{23})\nonumber\\
		&+ I(U_1;U_{23}|U_{123}U_{12}U_{13})+I(U_2;U_{13}|U_{123}U_{23}U_{12})\nonumber\\
		&+ I(U_1;U_3|U_{123}U_{12},U_{23}U_{13}U_2). 
	\end{align}
\end{subequations}

By the symmetry of decoders, we first consider the average probability of error for decoder $1$. To analyze it, the Table~\ref{tab2} list all possible pmfs of message tuple $(U_{123}^n(\textbf{m}), U_{12}^n(\textbf{m}, l'), U_{13}^n(\textbf{m}, l'), U_1^n(\textbf{m}, l'), Y_1^n )$. 
\begin{IEEEeqnarray*}{rCl}
	&&\mc{E}_{1,0} = \{(U^n_{123}(\textbf{1}), U^n_{12}(\textbf{1},l), U^n_{13} (\textbf{1},l),U_1^n(\textbf{1},l), Y_1^n) \notin\mathcal{T}_{\epsilon}^{(n)}\},
	\\
	&&\mc{E}_{1,1} = \{(U^n_{123}(\textbf{1}), U^n_{12}(\textbf{1},l), U^n_{13} (\textbf{1},l), U^n_1(*), Y_1^n) \in \mathcal{T}_{\epsilon}^{(n)}\},
	\\
	&&\mc{E}_{1,13} = \{(U^n_{123}(\textbf{1}), U^n_{12}(\textbf{1},l), U^n_{13} (*), U^n_1(*), Y_1^n) \in \mathcal{T}_{\epsilon}^{(n)}\},
	\\
	&&\mc{E}_{1,12} = \{(U^n_{123}(\textbf{1}), U^n_{12}(*), U^n_{13} (\textbf{1},l), U^n_1(*), Y_1^n) \in \mathcal{T}_{\epsilon}^{(n)}\},
	\\
	&&\mc{E}_{1,12,13} = \{(U^n_{123}(\textbf{1}), U^n_{12}(*), U^n_{13} (*), U^n_1(*), Y_1^n) \in \mathcal{T}_{\epsilon}^{(n)}\},
	\\
	&&\mc{E}_{1,123} = \{(U^n_{123}(*), U^n_{12}(*), U^n_{13} (*), U^n_1(*), Y_1^n) \in\mathcal{T}_{\epsilon}^{(n)} \}.
\end{IEEEeqnarray*}

\renewcommand\arraystretch{1.4}
\begin{table}[t!]
	\centering
	\begin{tabular}{|p{0.8cm}|p{1.2cm}|p{1.2cm}|p{0.6cm}|p{2.3cm}|}
		\hline
		$\textbf{m}_{123}$ & $(\textbf{m}_{12}, l'_{12})$ & $(\textbf{m}_{13}, l'_{13})$ & $m_{1,1}$ & Joint pmf \\
		\hline
		$\textbf{1}_{123}$ & $(\textbf{1}_{12}, l_{12})$ &$(\textbf{1}_{13}, l_{13})$ & 1 & $p^*p(y^n_1|u_{123}^nu_{12}^nu_{13}^nu_1^n)$  \\ 
		\hline
		$\textbf{1}_{123}$&$(\textbf{1}_{12}, l_{12})$&$(\textbf{1}_{13}, l_{13})$& *  & $p^*p(y^n_1|u_{123}^n u_{12}^n u_{13}^n)$ \\
		\hline
		$\textbf{1}_{123} $& $(\textbf{1}_{12}, l_{12})$  & *  & * & $p^*p(y^n_1|u_{123}^n u_{12}^n)$\\
		\hline
		$\textbf{1}_{123}$ & *& $(\textbf{1}_{13}, l_{13})$ & * & $p^*p(y^n_1|u^n_{123}u^n_{13})$\\
		\hline
		$\textbf{1}_{123}$ &*&* &* & $p^*p(y^n_1|u^n_{123})$\\
		\hline
		*  &*  &*& *  &$p^*p(y^n_1)$\\
		\hline
		\multicolumn{5}{l}{* denote the message or index $\textbf{m}_s \neq \textbf{1}_s, l'_{s} \neq l_{s}, s\in \{123, 12, 13, 1\}$.} \\
		\multicolumn{5}{l}{And $p^*$ denote joint pmf $p(u_{123}^n, u_{12}^n, u_{13}^n, u_1^n)$.}
	\end{tabular}
	\caption{Table of joint pmfs of all possible error messages }
	\label{tab2}
\end{table}

By the conditional typicality lemma, the first term $P(\mc{E}_{1,0})$ tends to zero as $n \to \infty$ since that $(u_{\mc{S}}^n(\textbf{m}_{\mc{S}}, {l}_{\mc{S}}):\mc{S} \in \{1,2,3,12,13, 23\}) \in \mathcal{T}_{\epsilon'}^{(n)}$  and $\epsilon' < \epsilon$.

Considering the second term $P(\mc{E}_{1,1})$, note that $U_1^n(l) \sim \prod_{i=1}^nP_{U_1}(u_{1,i})$ is independent of $Y_1^n$ for every $U_{1}(l) \notin \mathcal{C}_1(1)$, then the variables $U_1^n(*)$ is independent of $Y^n_1$ on condition of $(U^n_{123}(\textbf{1}), U^n_{12}(\textbf{1},l), U^n_{13} (\textbf{1},l))$. By the packing lemma,  $P(\mc{E}_{1,1})$ tends to zero as $n \to \infty$ if 
\begin{align*}
	\tilde{\bm{R}}_1 &<
	 I(U_1;Y_1|U_{123} U_{12} U_{13}).
\end{align*}
Similarly, $P(\mc{E}_{1,12})$, $P(\mc{E}_{1,13})$, $P(\mc{E}_{1,12,13})$ and $P(\mc{E}_{1,123})$ tends to zero as $n \to \infty$ if
\begin{align*}
	\tilde{\bm{R}}_{12}&+\tilde{\bm{R}}_1 <\nonumber\\
	&I(U_{12} U_1;Y_1|U_{123}U_{13})+I(U_{12};U_{13}|U_{123}),
	\\
	\tilde{\bm{R}}_{13}&+\tilde{\bm{R}}_1 <\nonumber\\
	&I(U_{13} U_1;Y_1|U_{123} U_{12})+I(U_{12};U_{13}|U_{123}),
	\\
	\tilde{\bm{R}}_{12}&+\tilde{\bm{R}}_{13}+\tilde{\bm{R}}_1 < \nonumber\\
	&I(U_{12} U_{13} U_1;Y_1|U_{123})+I(U_{12};U_{13}|U_{123}),
	\\
	\bm{R}_{123}&+\tilde{\bm{R}}_{12}+\tilde{\bm{R}}_{13}+\tilde{\bm{R}}_1 < \nonumber\\
	&I(U_{123} U_{12} U_{13} U_1;Y_1)+I(U_{12};U_{13}|U_{123}).
\end{align*}
For receiver 2 and 3, the corresponding inequalities can be derived in a similar way.

From the lower bound of linear combinations of $r_\mc{S}$ and the upper bound of linear combinations of $\tilde{\bm{R}}_{\mc{S}}$ about receiver 1, 2 and 3, eliminating $\bm{R}_{123}$, $\bm{R}_{12}$, $\bm{R}_{13}$, $\bm{R}_{23}$ and $\bm{R}_{1}, \bm{R}_{2}, \bm{R}_{3}$ by our elimination procedure, an example given in Figure.~\ref{fig:decomposition}, yields the characterization~\eqref{threeuser}.

\subsection{Coding scheme for $K\geq 2$}

For each $k\in\mc{K}$, split  message $M_k\in[1:2^{nR_k}]$    into $\sum_{j=1}^K {K\choose j} $ submessages $M_k\triangleq\big( M_{k,\mc{S}}:\mc{S}\subseteq \mc{K},\, |\mc{S}|=1,2,\dots,K \big)$ with $M_{k,\mathcal{S}}\in[1:2^{R_{k,\mathcal{S}}}]$ and $R_{k,\mathcal{S}}=0$ if $k\notin\mathcal{S}$. 
Similarly to $K=3$ case, we use the following notations 
\begin{IEEEeqnarray}{rCl}\label{eqDeR}
\tilde{\bm{R}}_{\mc{S}}\triangleq\sum_{k=1}^K \tilde{{R}}_{k,\mc{S}},~{\bm{R}}_{\mc{S}}\triangleq\sum_{k=1}^K {{R}}_{k,\mc{S}}, ~r_{\mc{S}}\triangleq\tilde{\bm{R}}_{\mc{S}}-{\bm{R}}_{\mc{S}},
\end{IEEEeqnarray}
 and  $\textbf{M}_{\mc{S}}\triangleq\big( M_{1,\mc{S}},\dots,M_{K,\mc{S}}\big)$.

\subsubsection{Codebook generation} Fix a pmf $p(u(\mathbb{A}))$ and function $x(u(\mathbb{A}))$ and let $\tilde{\bm{R}}_{\mc{S}}\geq \bm{R}_{\mc{S}}$, $\forall \mc{S}\in \mathbb{A}$.  Randomly and independently generate $2^{n\bm{R}_{\mc{K}}}$ sequences $u_{\mc{K}}^{n}(\textbf{m})$, $ \textbf{m}_{\mc{K}}\in\prod_{i=1}^{K}[1:2^{nR_{i,\mc{K}}} ] $ each according to $\prod_{i=1}^{n} p_{U_{\mc{K}}}(u_{\mc{K},i}) $. For all $ \mc{S}\in\mathbb{A}^{1:(K-1)}$,  construct a subcodebook $ \mc{C}_{\mc{S}}(\textbf{m}) $ consisting of $ 2^{n(\tilde{\bm{R}}_{\mc{S}}-\bm{R}_{\mc{S}})} $ i.i.d. generated sequences $u^{n}_{\mc{S}}(\textbf{m},j)$, $(\textbf{m}_{\mc{S}}  ,  j_{\mc{S}})\in \prod_{i=1}^{K}[ 1:2^{nR_{i,\mc{S}}} ]\times[ 1 : 2^{n(\tilde{\bm{R}}_{\mc{S}}-\bm{R}_{\mc{S}})}]$, each according to $\prod_{i=1}^{n}p_{U_{\mc{S}} | U(\mathbb{A}(\mc{S})\setminus\mc{S})}( u_{\mc{S},i}|u_{\mc{S}',i}:\mc{S}'\in(\mathbb{A}(\mc{S})\setminus\mc{S} ))$. 


\subsubsection{Encoding.} 
For each $ \textbf{m}_{\mc{S}}\in  \prod_{i=1}^{K}[ 1:2^{nR_{i,\mc{S}}} ]$, $ \mc{S}\in\mathbb{A} $, find an index tuple $ (j_{\mc{S}}:\mc{S}\in\mathbb{A}) $ such that,
 $l\in[1:(K-1)]$, $u^{n}_{\mc{S}}(\textbf{m},j)\in \mc{C}_{\mc{S}}(\textbf{m}) $ and 
$$\big((u^{n}_\mc{S}(\textbf{m},j):\mc{S}\in\mathbb{A}), u^{n}_{\mc{K}}(\textbf{m})\big)\in\mc{T}_{\epsilon_1}^{(n)}.$$
If there is more than one such tuple, pick an arbitrary one among them. If no such tuple exists, pick $(j_{\mc{S}}:\mc{S}\in\mathbb{A})=\textbf{1}$. Then generate $x^n(\textbf{m}_{\mc{S}}:\mc{S}\in \mathbb{A})$ with 
\begin{equation*}
	x_i = x(u_{\mc{K},i}(\textbf{m}) ,\, u_{\mc{S},i}(\textbf{m},j):\forall \mc{S}\in\mathbb{A}),
\end{equation*}
for $i\in[n]$.

To send the message tuple $(m_1,\ldots,m_K)=(\textbf{m}_{\mc{S}}: \mc{S}\in\mathbb{A})$, transmit $x^{n}(\textbf{m}_{\mc{S}}: \mc{S}\in\mathbb{A})$.

\subsubsection{Decoding.} Let $\epsilon>\epsilon'$. Decoder $i$, $i\in\mc{K}$, declares that $ \hat{m}_{i} $ is sent if it is the unique message such that $$ \big(y_{i}^{n}, u^{n}_{\mc{S}}(\textbf{m},j):\mc{S}\in \mathbb{A}^{1:(K-1)}(i),u_{\mc{K}}^{n}(\textbf{m})\big)\in\mc{T}_{\epsilon}^{(n)},$$
 for some $(u^{n}_{\mc{S}}(\textbf{m},j):\mc{S}\in \mathbb{A}^{1:(K-1)}(i),u_{\mc{K}}^{n}(\textbf{m}))$ $\in$ $ \prod_{\mc{S}\in\mathbb{A}^{1:(K-1)}} \mc{C}_{\mc{S}}( \textbf{1} )\times \mc{C}_{\mc{K}}( \textbf{1})$.

\subsubsection{Analysis of the probability of error.} 
Assume without loss of generality that $(M_1,\ldots,M_K)=(\textbf{M}_{\mc{S}}:\forall \mc{S}\in\mathbb{A})=(1,\ldots,1)$ and let $ (J_{\mc{S}}: \mc{S}\in \mathbb{A}^{1:(K-1)}) ) $ be the index tuple of the chosen sequences $(U_{\mc{S}}:\mc{S}\in\mathbb{A}^{1:(K-1)})\in \prod_{\mc{S}\in\mathbb{A}^{1:(K-1)}} \mc{C}_{\mc{S}}( \textbf{1} )$. Then decoder $ i $ makes an error only if one or more of the following events occur:

	\begin{align*}\label{error_event}
		&\mc{E}_0=\big\{  	(U^{n}_\mc{S}(\textbf{m},j):\mc{S}\in\mathbb{A}^{l:(K-1)},U_{\mc{K}}^{n}(\textbf{m}))\\
		&\notin\mc{T}_{\epsilon'}^{(n)}\forall (U^{n}_{\mc{S}}:\mc{S}\in\mathbb{A}^{l})\in \prod_{\mc{S}\in\mathbb{A}^{l}} \mc{C}_{\mc{S}}( \textbf{1} ), \forall 	l\in[1:(K-1)]\big\},\\
		&\mc{E}_{i,1}=\big\{  \big(Y_{i}^{n}, U^{n}_{\mc{S}}(\textbf{M},J)\!:\!\mc{S}\in\mathbb{A}^{1:(K-1)}(i),U_{\mc{K}}^{n}(\textbf{m})\big)\notin\mc{T}_{\epsilon}^{(n)}  \big\},\\		
		&\mc{E}_{i,2}=\big\{  \big(Y_{i}^{n}, U^{n}_{\mc{S}}(\textbf{M},j):\mc{S}\in 	\mathbb{A}^{1:(K-1)}(i),U_{\mc{K}}^{n}(\textbf{m})\big)\\
		&\!\!\in\mc{T}_{\epsilon}^{(n)} \!\!\text{ for some }\!\!(U^{n}_{\mc{S}}(\textbf{M},j):\mc{S}\in 	\mathbb{A}^{1:(K-1)}(i)\big)\notin  \!\!\!\!\!\!\!\!\!\!\!\!\prod_{\mc{S}\in\mathbb{A}^{1:(K-1)}(i)}\!\!\!\!\!\!\!\!\!\!\! \mc{C}_{\mc{S}}( \textbf{1} ) \big\}.
	\end{align*}

Therefore, the probability of error for decoder $ i $ is upper bounded as
\begin{equation*}
	P(\mc{E}_i)\leq P(\mc{E}_{0})+P(\mc{E}^{c}_{0}\cap\mc{E}_{i,1})+P(\mc{E}_{i,2}).
\end{equation*}

To bound $P(\mc{E}_{0})$, we utilize the hierarchical covering lemma proposed before. Noticing that sequences among the subcodebooks $ \mc{C}_{\mc{S}}(\textbf{m}) $, $ \mc{S}\in\mathbb{A}^{l} $, $ l\in[1:(K-1)] $ are mutually conditionally independent, the usage of hierarchical covering lemma is straightforward with ${r}_{\mc{S}}=\tilde{\bm{R}}_{\mc{S}} -\bm{R}_{\mc{S}}$. Hence, $P(\mc{E}_0)$ tends to zero as $n\to\infty$ if ~\eqref{eq:hclemma} holds for all $l\in$  $[1:(K-1)]$ and $ \mc{J}^{l}\in\mathbb{A}^{l} $.

To bound $ P(\mc{E}^{c}_{0}\cap\mc{E}_{i,1}) $, by the conditional typicality lemma ~\cite{nit} it tends to zero as $n\to\infty$.

 To bound $ P(\mc{E}_{i,2}) $, noticing that $ U^{n}_{\mc{S}}(\textbf{M},j)\thicksim  \prod_{i=1}^{n}$  $p_{U_{\mc{S}} | U(\mathbb{A}(\mc{S})\setminus\mc{S})}( u_{\mc{S},i}|u_{\mc{S}',i}\!\!:\!\!\mc{S}'\in\mathbb{A}(\mc{S})\setminus\mc{S} )$ and $ Y^{n}_{i} $ are independent of $ U^{n}_{\mc{S}}(\textbf{M},j) \notin \mc{C}_{\mc{S}}( \textbf{1}) $,  $\forall \mc{S}\in \mathbb{A}^{1:(K-1)}(i)$, as well as $ U_{\mc{K}}(\textbf{M}) \notin \mc{C}_{\mc{K}}(\textbf{1})$. Furthermore, if, $ \forall \mc{S}\in \mathbb{A}(i) $,  $ U^{n}_{\mc{S}} \notin \mc{C}_{\mc{S}}(\textbf{1})$, then by the conditional coding distribution, for all $ \mc{S}'\in \mathbb{B}_{i}(\mc{S}) $, we have $ U_{\mc{S}'}(\textbf{M},j) \notin \mc{C}_{\mc{S}'}(\textbf{1}) $. Thus, $ \forall l\in\mc{K} $, $ \forall \mc{J}\subseteq   \mathbb{A}(i)$ by identifying $\bigcup_{\mc{S}\in\mc{J}}\mathbb{B}_{i}(\mc{S})$, we obtain inequalities with the help of packing lemma:
\begin{equation}\label{packinglemma}
	\begin{aligned}
		& \sum_{\mc{S}\in \bigcup_{\mc{S}'\in\mc{J}}\mathbb{B}_{i}(\mc{S}')   }	\!\!\!\tilde{\bm{R}}_{\mc{S}}
	<\! \!\sum_{\mc{S}'\in \bigcup_{\mc{S}\in\mc{J}}\mathbb{B}_{i}(\mc{S})} H\big(  U_{\mc{S}'}\mid U\big(\mathbb{A}(\mc{S}')\setminus \mc{S}'\big)\big)\\
		& \qquad
	-H\big(  U(\!\bigcup_{\mc{S}\in\mc{J}}\!\mathbb{B}_{i}(\mc{S})) \big | Y_{i}U(\mathbb{A}(i)\setminus\!\bigcup_{\mc{S}\in\mc{J}}\!\mathbb{B}_{i}(\mc{S}) )\big)  -\delta(\epsilon).
	\end{aligned}
\end{equation}
\subsubsection{Eliminating $(\tilde{R}_{k,\mc{S}},{r}_{\mc{S}}: k\in\mc{K}, \mc{S}\in\mb{A})$}{
Due to massive numbers of $\tilde{R}_{k,\mc{S}}$ and ${r}_{\mc{S}}$, using standard Fourier-Motzkin elimination to obtain the achievable rate region is  disastrous. We first present observations which help the elimination, and then find all valid constraints for $ \sum_{k\in\mc{T}} R_{k}  $ for all $ \mc{T}\subseteq\mc{K} $.\\
\textbf{Observation}: 
Let $\mb{S}(i,\mc{J})\triangleq\bigcup_{\mc{S}\in\mc{J}}\mathbb{B}_{i}(\mc{S})$, and  denote the right hand term of \eqref{packinglemma} by $I_{\mb{S}(i,\mc{J}) }$, thus \eqref{packinglemma} can be rewritten  as  below.
\begin{equation}\label{EqPackingRew}
	\begin{aligned}
		& \sum_{\mc{S}\in \mb{S}(i,\mc{J}) }\tilde{\bm{R}}_{\mc{S}}
	< I_{\mb{S}(i,\mc{J})}.
	\end{aligned}
\end{equation}
a) If there exists $i', i''$  and $\mc{J}',\mc{J}''$ such that 
\begin{subequations}\label{EqOb1}
\begin{IEEEeqnarray}{rCl}
\mb{S}(i,\mc{J})=\mb{S}(i',\mc{J}')\cup \mb{S}(i'',\mc{J}''),
\end{IEEEeqnarray}
 then 
\begin{IEEEeqnarray}{rCl}
 I_{\mb{S}(i,\mc{J})} \leq I_{\mb{S}(i',\mc{J})'}+I_{\mb{S}(i'',\mc{J}'')}.
\end{IEEEeqnarray}
\end{subequations}
b) If there exists $i'$  and $\mc{J}'$ such that $\mb{S}(i,\mc{J})\subseteq\mb{S}(i',\mc{J}'),$ then 
\begin{IEEEeqnarray}{rCl}\label{EqOb2}
 I_{\mb{S}(i,\mc{J})} \leq I_{\mb{S}(i',\mc{J})'}.
\end{IEEEeqnarray}

 From the rate-splitting procedure, we have $R_{i}= \sum_{\mc{S}\in \mathbb{A}(i)}R_{i,\mc{S}}$, for all $i\in\mc{K}$. Thus in order to find rate constraints for $\sum_{k\in\mc{T}} R_{k} $, we must find all valid constraints for $  \sum_{k\in\mc{T}}\sum_{\mc{S}\in\mathbb{A}(k)} R_{k,\mc{S}} $. By definitions in \eqref{eqDeR}, we have \begin{IEEEeqnarray}{rCl}\label{EqMinR}
 \sum_{k\in\mc{T}}R_k\stackrel{(a)}{\leq}\!\!\sum_{\mc{S}\in\cup_{k\in\mc{T}}\mathbb{A}(k)} \!\!\bm{R}_{\mc{S}}&=& \sum_{\mc{S}\in \bigcup_{k\in\mc{T}} \mathbb{A}(k)} \!\!\tilde{\bm{R}}_\mc{S}-\!\! \sum_{\mc{S}\in \bigcup_{k\in\mc{T}} \mathbb{A}(k) } \!\! r_\mc{S},\nonumber\\
\end{IEEEeqnarray}
where (a) is an equation if and only if $ R_{k,\mc{S}}=0 $ for all $ k\notin\mc{T} $.
 
 Consider  a permutation $\pi\in\Pi_{\mc{T}}$. For easy reading, rewrite the decomposition \eqref{Expansion}:
 \begin{IEEEeqnarray}{rCl}\label{ExpansionA}
 \bigcup_{k\in\mc{T}} \mathbb{A}(k)  =\bigsqcup_{i=0}^{|\mc{T}|-1} \bm{\mathcal{B}}_\pi(i).
\end{IEEEeqnarray}
 
 For $R_{\pi(1)}$, it must involve $R_{\pi(1),\mc{S}}:\mc{S}\in \mb{A}(\pi(1))$. Since $\mb{A}(\pi(1))=\bm{\mathcal{B}}_\pi(0)=\mb{S}(\pi(1), \mc{K})$ is the largest set with every element containing $\pi(1)$,  by observation in \eqref{EqOb1} and \eqref{EqOb2}, we have the only valid rate constraint $\sum_{\mc{S}\in \mb{A}(\pi(1) )} \tilde{\bm{R}}_\mc{S}\leq I_{ \mb{A}(\pi(1))}=I_{\bm{\mathcal{B}}_\pi(0)}$, to support $R_{\pi(1)}$. 
 
 For  $R_{\pi(2)}$, since $\sum_{\mc{S}\in \mb{A}(\pi(1) )} \tilde{\bm{R}}_\mc{S}$ already contains $\sum_{\mc{S}\in \mb{A}(\{\pi(1)\pi(2)\} )} {{R}}_{k,\mc{S}}$, thus we need to find the rate constraints for $\sum_{\mc{S}\in \mb{A}(\pi(2)) \backslash   \mb{A}(\{\pi(1)\pi(2)\} ) } {{R}}_{k,\mc{S}}$. Since $ \mb{A}(\pi(2)) \backslash   \mb{A}(\{\pi(1)\pi(2)\} )= \bm{\mathcal{B}}_\pi(1)=\mb{S}\big(\pi(2),\mc{K}\backslash\{\pi(1)\}\big)$, and $\bm{\mathcal{B}}_\pi(1)$  is second largest  sets with every element containing $\pi(2)$ and excluding $\pi(1)$, by observation  in (\ref{EqOb1}) and \eqref{EqOb2}, we  obtain the only effective rate constraints  
 $\sum_{\mc{S}\in  \bm{\mathcal{B}}_\pi(1)} \tilde{\bm{R}}_\mc{S}\leq I_{\bm{\mathcal{B}}_\pi(1)} $ to support $R_{\pi(2)}$. Following the similar steps and by \eqref{EqMinR}, we find all valid rate constraints under the permutation $\pi$ for all $R_{\pi(k)}$, $  k\in[1:|\mc{T}|] $: 
 \begin{IEEEeqnarray}{rCl}
\sum_{k\in\mc{T}} R_{k} \leq   \sum_{\mc{S}\in\bigsqcup_{i=0}^{|\mc{T}|-1}} I_{\bm{\mc{B}}_{\pi(i)}} - \sum_{\mc{S}\in \bigcup_{k\in\mc{T}} \mathbb{A}(k) } r_\mc{S}.
\end{IEEEeqnarray}

Thus, formally we have for all $ \mc{T} \subseteq \mc{K} $ and $ \pi\in \Pi_{\mc{T}} $:
\begin{equation*}
		\begin{aligned}
	\,\,\sum_{k\in\mc{T}} R_{k}&{\leq}    \sum_{\mc{S}\in\bigsqcup_{i=0}^{|\mc{T}|-1}} I_{\bm{\mc{B}}_{\pi(i)}} - \sum_{\mc{S}\in \bigcup_{k\in\mc{T}} \mathbb{A}(k) } r_\mc{S}\\
	&\stackrel{(a)}{=}    \sum_{\mc{S}\in\bigsqcup_{i=0}^{|\mc{T}|-1}} I_{\bm{\mc{B}}_{\pi(i)}}- \sum_{l=1}^{K-1} \sum_{\mc{S}\in \bigsqcup_{i=0}^{|\mc{T}|-1} \bm{\mathcal{B}}_\pi^{l}(i) } r_\mc{S}\\
			&\stackrel{(b)}< \sum_{i=0}^{|\mc{T}|-1} \Big(  \sum_{\mc{S}'\in   \bm{\mathcal{B}}_\pi(i) } H\big(  U_{\mc{S}'}\mid U\big(\mathbb{A}(\mc{S}')\setminus \mc{S}'\big)
			\\
			&\quad\,\,-  H \big(    U\big(\bm{\mathcal{B}}_\pi(i) \big)\! \mid  \!Y_{\pi(i\!+\!1)}  U\big( \mathbb{A}( \pi(i\!+\!1) ) \! \!\setminus \! \bm{\mathcal{B}}_\pi(i)  \big)  \!  \big)\! \Big)\quad\\
			&\qquad+ \sum_{l=1}^{K-1} \Big(   H\big(   U\big(\bigsqcup_{i=0}^{|\mc{T}|-1} \bm{\mathcal{B}}_\pi^{l}(i)\big)\mid U\big(\mathbb{A}^{(l+1):K}\big)   \big)
			\\
			&\qquad\qquad- \!\sum_{\mc{S}\in\bigsqcup_{i=0}^{|\mc{T}|-1} \bm{\mathcal{B}}_\pi^{l}(i)} H\big(  U_\mc{S}\mid U\big(\mathbb{A}(S)\setminus \mc{S}\big)  \big)    \Big),
		\end{aligned}
	\end{equation*}
where 
($a$) comes from the disjoint decomposition in \eqref{Expansion2}; ($b$) holds by letting $\mc{J}^l=\bigsqcup_{i=0}^{|\mc{T}|-1} \bm{\mathcal{B}}_\pi^{l}(i) $ in Lemma \ref{hcl} and by Lemma \ref{combine}.


 \section{Conclusion}
 In this paper, we propose new scheme for $ K $-receiver DM-BCs with  private messages based on    exhaustive message  splitting and   $K$-level Marton's coding. The achievable rate region allows arbitrary pmfs, and  improves  over all previously known ones for $ K\geq 3$ case whose input pmfs should satisfy certain Markov chain(s). A hierarchical  covering lemma is established which extends the $2$-level multivariate covering lemma to $K$-level case.
}
\appendices
\section{Proof of hierarchical covering lemma~\ref{hcl}}\label{prf:hcl}

Let
\begin{IEEEeqnarray}{rCl}
	\mathcal{M}_{\mathbb{A}^{l}} &\triangleq& 
	\prod_{{\mc{S}}\in {\mathbb{A}^{l}}} \mc{M}_{\mc{S}}=\prod_{\mc{S}\in\mathbb{A}^{l}} [1:2^{nr_{\mc{S}}}].
\end{IEEEeqnarray}

	{Divide the messages and codewords into $K$ levels, where at level $l\in[1:K]$, there are messages  sets	$\{\mc{M}_{\mc{S}}:\mc{S}\in\mathbb{A}^{l} \}$,
and codewords $$\big(U^{n}_{\mc{S}}(m_{\mc{S}}):{\mc{S}}\in {\mathbb{A}^{l}}, m_\mc{S}\in \mathcal{M}_{\mathbb{A}^{l}} \big).$$


	The proof of hierarchical covering lemma starts at level $ l=K-1 $, finding a message tuple $ (m_{\mc{S}}:\mc{S}\in\mathbb{A}^{K-1}) $ such that $ (u^{n}_\mc{S}(m_\mc{S}):\mc{S}\in\mathbb{A}^{K-1},u^{n}_{\mc{K}})\in \mc{T}_{\epsilon_{K-1}}^{(n)}$. Then with a descending order of $ l=K-2,K-1,\ldots,1$, we find a message tuple $ (m_{\mc{S}}:\mc{S}\in\mathbb{A}^{l}) $ such that $\big( (u^{n}_\mc{S}(m_\mc{S}):\mc{S}\in\mathbb{A}^{l}),u^{n}_{\mc{K}}\big)\in \mc{T}_{\epsilon_{l}}^{(n)}$. 
	
	 For each $l\in[1:K]$, we use $\mc{S}^l_i$ to denote the element of $\mathbb{A}^{l}$ for $i=1,\ldots,|\mathbb{A}^{l}|$, i.e.,
	 $$\mathbb{A}^{l}=\{\mc{S}^l_1,\ldots,\mc{S}^l_{|\mathbb{A}^{l}|}\}.$$
 We define, for all  $  l\in[1:K] $ , the random events
	
%

	\begin{IEEEeqnarray*}{rCl}
		\mc{E}_{l}&=&\Big\{\Big(\big( U^{n}_\mc{S}(m_\mc{S}):\mc{S}\in\mathbb{A}^{l:(K-1)}\big),U^{n}_{\mc{K}}\Big)\notin\mc{T}_{\epsilon_{l}}^{(n)},\\&&\quad\quad~\forall~
		(m_\mc{S}:\mc{S}\in\mathbb{A}^{l:(K-1)})\in \!\! \prod_{\mc{S}\in\mathbb{A}^{l:(K-1)}}\!\! [1:2^{nr_{\mc{S}}}] \Big\},\\
		\mc{A}_{l}&=&\big\{(m_{\mc{S}^l_1},\ldots, m_{\mc{S}^l_{|\mathbb{A}^{l}|}})\in\mathcal{M}_{\mathbb{A}^{l}} : \\
		&&\quad \big(U^n(m_{\mc{S}^l_1})\ldots U^n(m_{\mc{S}^l_{|\mathbb{A}^{l}|}}), \bm{u}^n(\mathbb{A}^{(l+1):K})\big)\in \mc{T}_{{\epsilon}_l}^{(n)}  \\
		&&\quad\quad \big| \bm{u}^n(\mathbb{A}^{(l+1):K}) \in \mc{T}_{\epsilon_{l+1}}^{(n)}\big\},\\
				\mc{Q}_{l}&=&  \big(u^n(m_{\mc{S}^j_1})\ldots u^n(m_{\mc{S}^j_{|\mathbb{A}^{j}|}})\!:\! j\in[(l+1)\!:\!K] \big)\in \mc{T}_{\epsilon_l}^{(n)},
\end{IEEEeqnarray*}
where 	\begin{align*}
	\bm{u}^n(\mathbb{A}^{(l+1):K})=&\big(u^n(m_{\mc{S}^j_1})\ldots u^n(m_{\mc{S}^j_{|\mathbb{A}^{j}|}}):\!\\
	&\! j\in[(l+1)\!:\!(K-1)] , u^{n}_{\mc{K}}\big)
\end{align*}.
	{Note that 	$\mc{E}_{l}$ denotes the event that there isn't any  tuple of sequences from level $l$ to $K$ such that they   are jointly typical. Correspondingly, $ \mc{E}_{l}^c$ denotes the event that there exists such tuple of sequences. $\mc{A}_{l}$ denotes the set of messages at level $l$ such that the corresponding codewords are jointly typical with a given tuple of jointly typical codewords from levels $l+1$ to $K$. $\mc{Q}_{l}$ denotes a condition that there exists message $(m_{\mc{S}^l_1},\ldots, m_{\mc{S}^l_{|\mathbb{A}^{l}|}})\in \mc{M}_{\mathbb{A}^{l}}$ at level $l+1,...,K$ such that the corresponding codewords are jointly typical. 
}	

Our goal is to find rate conditions such that  there exists a tuple of messages $(m_\mc{S}: \mc{S}\in\mathbb{A}^{1:(K-1)})$ satisfying  $\big(U^{n}_\mc{S}(m_\mc{S}):\mc{S}\in\mathbb{A}^{1:(K-1)},U^{n}_{\mc{K}}\big)\in\mc{T}_{\epsilon_{1}}^{(n)}$, which is equivalent to $ P\{  \mc{E}_{1}  \}\to 0$ as $n\to \infty$.

	The probability of $ \mc{E}_{1}$ can be upper bounded as 
	\begin{IEEEeqnarray*}{rCl}
	P\{  \mc{E}_{1}  \} 
	&=&P\{   \mc{E}_{1}\cap \mc{E}_{2}^{c}  \} + P\{    \mc{E}_{1}\cap \mc{E}_{2}  \}\\
	&\leq&P\{   \mc{E}_{1}| \mc{E}_{2}^{c}  \} + P\{ \mc{E}_{2}  \}.
\end{IEEEeqnarray*}
With recursively upper bounding $ P(\mc{E}_{l}) $, we obtain:
\begin{IEEEeqnarray}{rCl}\label{EqPro}
\label{errorprob}
	P\{\mc{E}_{1}\} &\leq &\sum_{l=1}^{K-1} P\{\mc{E}_{l}|\mc{E}_{l+1}^{c}\} + P\{\mc{E}_{K}\}\nonumber \\
	&=& \sum_{l=1}^{K-1} P\{|\mc{A}_l|=0\} + P\{\mc{E}_{K}\},
\end{IEEEeqnarray}
	where the last equality holds by  definitions of $\mc{A}_l$ and $\mc{E}_l$. Since $P(\mc{E}_{K}) $ tends to zero as $ n\to\infty $ by the property of typicality, and $K$ is a fixed channel parameter, in order to make $P\{\mc{E}_{1}\}$ tend to $0$, it's sufficient to let  $P\{|\mc{A}_l|=0\}$  tend  $0$ as $ n\to\infty $ for all   $l\in[1:K-1]$.

		To analyze~\eqref{errorprob}, $P\big\{|\mc{A}_{l}|=0\big\}  $ is upper bounded by a variation of Chebyshev inequality:
	\[P\big\{|\mc{A}_{l}|=0\big\}\!\leq\! P\big\{\big(|\mc{A}_{l}|-\text{E}|\mc{A}_{l}|\big)^2\!\leq\!\big(\text{E}|\mc{A}_{l}|\big)^2\big\}\!\leq\!\frac{\text{Var}\big(|\mc{A}_{l}|\big)}{\big(\text{E}|\mc{A}_{l}|\big)^2}.\]

	Using indicator random variables, $|\mc{A}_{l}|$ can be written as
	\begin{subequations}\label{eqAl}
	\begin{IEEEeqnarray}{rCl}
	|\mc{A}_{l}|&=&\sum_{j=1}^{|\mathbb{A}^l|}\sum_{m_{\mc{S}^l_j}\in \mc{M}_{\mc{S}^l_j}} \text{I}( m_{\mc{S}^l_1},\ldots, m_{\mc{S}^l_{|\mathbb{A}^{l}|}}) 
\end{IEEEeqnarray}
	where
	\begin{align}
		&\mathrel{\phantom{=}}\text{I}(m_{\mc{S}^l_1},\ldots, m_{\mc{S}^l_{|\mathbb{A}^{l}|}})=
		\begin{cases}
			1 & \text{if } \mc{Q}_{l} \text{ given } \mc{Q}_{l+1},\\
			0 & \text{otherwise,}
		\end{cases}
	\end{align}
	for each $(m_{\mc{S}^l_1},\ldots, m_{\mc{S}^l_{|\mathbb{A}^{l}|}})\in \mc{M}_{\mathbb{A}^{l}}$.\par
\end{subequations}
For all $\mc{J}^{l}\subseteq\mathbb{A}^{l}$, define
	\begin{IEEEeqnarray}{rCl}\label{eqJl}
		p_{\mc{J}^{l}} = P\big\{   &&  \big(U^n(m_{\mc{S}^l_1})\ldots U^n(m_{\mc{S}^l_{|\mathbb{A}^{l}|}}), \bm{u}^n(\mathbb{A}^{(l+1):K})\big)\in \mc{T}_{{\epsilon}_l}^{(n)},  \nonumber \\
		&&   \big(U^n(m'_{\mc{S}^l_1})\ldots U^n(m'_{\mc{S}^l_{|\mathbb{A}^{l}|}}), \bm{u}^n(\mathbb{A}^{(l+1):K})\big)\in \mc{T}_{{\epsilon}_l}^{(n)},  \nonumber \\
		&&  \text{ with } m_{\mc{S}^l_j}=m'_{\mc{S}^l_j}\text{ if }\mc{S}^l_j \in \mc{J}^l \text{ and vice versa}\nonumber\\ 
		&&\quad\quad \big| \bm{u}^n(\mathbb{A}^{l(+1):K}) \in \mc{T}_{\epsilon_{l+1}}^{(n)}\big\}.
\end{IEEEeqnarray}
From \eqref{eqAl} and definition of  $p_{\mc{J}^{l}}$ in \eqref{eqJl},  we have
	\begin{IEEEeqnarray*}{rCl}
	\text{E}\big(|\mc{A}_{l}|\big) &=&2^{n\sum_{\mc{S}\in\mathbb{A}^{l}}  r_\mc{S}} \cdot p_{\mathbb{A}^{l}},\\
		\text{E}\big(|\mc{A}_{l}|^2\big) &=&  \sum_{ m_{\mc{S}^l_1},\ldots, m_{\mc{S}^l_{\!|\mathbb{A}^{l}|}} } \sum_{ \substack{m'_{\mc{S}} \neq m_{\mc{S}}, \\ \forall \mc{S}\in \mathbb{A}^l\backslash \mc{J}^l} }    p_{\mc{J}^{l}}.\\
		&\leq& 2^{n\sum_{\mc{S}\in\mathbb{A}^{l}} r_\mc{S}}\cdot \sum_{\mc{J}^{l}\subseteq\mathbb{A}^{l}}2^{n\sum_{\mc{S}\in\mathbb{A}^l\setminus \mc{J}^{l}} r_{\mc{S}}}\cdot p_{\mc{J}^{l}}.
\end{IEEEeqnarray*}

From definition of $p_{\mc{J}^{l}}$ and independence of codewords, we have $(p_{\mc{A}_{l}})^2=p_\emptyset$.  Hence
	\[
	\text{Var}(|\mc{A}_{l}|) \leq 2^{n\sum_{\mc{S}\in\mathbb{A}^{l}} r_\mc{S}} \sum_{\mc{J}^{l}\subseteq \{\mathbb{A}^{l}\setminus\emptyset \}} 
 	2^{n\sum_{\mc{S}\in \mathbb{A}^l\setminus \mc{J}^{l}}  r_{\mc{S}}}\cdot p_{\mc{J}^{l}}.
	\]
 Next we compute the upper bound of   $p_{\mc{J}^{l}}$.

	\begin{IEEEeqnarray}{rCl}\label{eqPjl}
		p_{\mc{J}^{l}} \!=  \!&&\!\ P\big\{   \big(U^n(m_{\mc{S}^l_1})\ldots U^n(m_{\mc{S}^l_{|\mathbb{A}^{l}|}}), \bm{u}^n(\mathbb{A}^{(l+1):K})\big)\in \mc{T}_{{\epsilon}_l}^{(n)},  \nonumber \\
		&& \quad  \big(U^n(m'_{\mc{S}^l_1})\ldots U^n(m'_{\mc{S}^l_{|\mathbb{A}^{l}|}}), \bm{u}^n(\mathbb{A}^{(l+1):K})\big)\in \mc{T}_{{\epsilon}_l}^{(n)},   \nonumber\\
		&& \,\, \text{ with } m_{\mc{S}^l_j}=m'_{\mc{S}^l_j}\text{ if }\mc{S}^l_j \in \mc{J}_l \text{ and vice versa}\mid \mc{Q}_{l+1}\big\}\nonumber\\ 
		= &&\!\ P\big\{\! \big(U^n(m_{\mc{S}^l_1}\!)\ldots U^n(m_{\mc{S}^l_{\!\!|\mathbb{A}^{l}|}}\!),\! \bm{u}^n(\mathbb{A}^{l\!+\!1:K})\big)\!\in\! \mc{T}_{{\epsilon}_l}^{(n)}\!\mid\! \mc{Q}_{l\!+\!1}\big\}  \nonumber \\
		\!\!&&~\cdot\ P\Big\{   \big(U^n(m'_{\mc{S}^l_1})\ldots U^n(m'_{\mc{S}^l_{\!|\mathbb{A}^{l}|}}), \bm{u}^n(\mathbb{A}^{(l+1):K})\big)\in \mc{T}_{{\epsilon}_l}^{(n)}\nonumber\\
		&& ~\quad~~\mid\!\! \big(U^n(m_{\mc{S}^l_1}\!)\ldots U^n(m_{\mc{S}^l_{\!|\mathbb{A}^{l}|}}\!\!), \!\bm{u}^n(\mathbb{A}^{l\!+\!1:K})\big)\!\in \!\mc{T}_{{\epsilon}_l}^{(n)}, \!\mc{Q}_{l\!+\!1} \Big\} \nonumber \\
		\stackrel{(a)}{\leq} && \ 2^{n\big(H\big(U(\mathbb{A}^{l})|U(\mathbb{A}^{(l+1):K})\big)+\delta(\epsilon_{l}))\big)}\nonumber\\
		&&~\cdot\ 2^{-n\big( \sum_{\mc{S}\in\mathbb{A}^{l}} \!\!H\big(U_\mc{S}\mid U(\mathbb{A}(\mc{S})\setminus \mc{S}) -\delta(\epsilon_{l})\big)}\nonumber\\
		&&~\cdot\ 2^{n\big(H\big(U( \mathbb{A}^{l}\setminus\mc{J}^{l} )\mid U(  \mc{J}^{l}),U(\mathbb{A}^{(l+1):K}\big)\big) + \delta(\epsilon_{l})\big)}\nonumber\\
		&&~\cdot\ 2^{-n\big(\sum_{\mc{S}\in\mathbb{A}^{l}\setminus\mc{J}^{l}}H\big(U_\mc{S}\mid U(\mathbb{A}(\mc{S})\setminus \mc{S})\big) -  \delta(\epsilon_{l})\big)}
	\end{IEEEeqnarray}
where  ($a$)  follows by properties of joint typicality and the mutually conditional independence property, i.e., given $\bm{u}^{n}\big(\mathbb{A}^{(l+1):K}\big)$, $\big(U^{n}_\mc{S}(m'_\mc{S}):\mc{S}\in\mathbb{A}^{l}\setminus\mc{J}^{l}\big)$ and $\big(U^{n}_\mc{S}(m_\mc{S}):\mc{S}\in\mathbb{A}^{l}\setminus\mc{J}^{l}\big)$ are conditionally independent.
The lower bound of $p_{\mathbb{A}^{l}}$	can be derived as
	\begin{IEEEeqnarray}{rCl}\label{eqPal}
	p_{\mathbb{A}^{l}} &&= P\big\{   \big(U^n(m_{\mc{S}^l_1})\ldots U^n(m_{\mc{S}^l_{|\mathbb{A}^{l}|}}), \bm{u}^n(\mathbb{A}^{(l+1):K})\big)\in \mc{T}_{{\epsilon}_l}^{(n)}, \nonumber  \\
		&&\quad\quad \big| \bm{u}^n(\mathbb{A}^{(l+1):K}) \in \mc{T}_{\epsilon_{l+1}}^{(n)}\big\}, \nonumber \\ 
		&&\geq (1-\epsilon_l)  \cdot 2^{n \big(H\big(U(\mathbb{A}^{l})|U(\mathbb{A}^{(l+1):K})\big)- \delta(\epsilon_l)\big)\big)\, } \nonumber\\
		&&\quad \cdot\ 2^{-n\big( \sum_{\mc{S}\in\mathbb{A}^{l}} H\big(U_\mc{S}\mid U(\mathbb{A}(\mc{S})\setminus \mc{S})\big)+\delta(\epsilon_l)\big)}. 
\end{IEEEeqnarray}
where the last equality holds by the properties of jointly typicality.	
	From \eqref{eqPjl} and \eqref{eqPal}, we have 
\begin{IEEEeqnarray*}{rCl}
\frac{p_{\mc{J}^{l}}}{(p_{\mathbb{A}^{l}})^2}\leq&&\  2^{n\big(H\big(U(\mathbb{A}^{l})|U(\mathbb{A}^{(l+1):K})\big)+\delta(\epsilon_{l}))\big)}\\
&&~\cdot\ 2^{-n\big( \sum_{\mc{S}\in\mathbb{A}^{l}} H\big(U_\mc{S}\mid U(\mathbb{A}(\mc{S})\setminus \mc{S}) -\delta(\epsilon_{l})\big)}\\
		&&~\cdot\ 2^{n\big(H\big(U\big( \mathbb{A}^{l}\setminus\mc{J}^{l} \big)\mid U\big(  \mc{J}^{l}\big),U\big(\mathbb{A}^{(l+1):K}\big)\big) + \delta(\epsilon_{l})\big)}\\
		&&~\cdot\ (1-\epsilon_l)^{-2}  \cdot 2^{2n\big(H\big(U(\mathbb{A}^{l})|U(\mathbb{A}^{(l+1):K})\big)+\delta(\epsilon_{l}))\big)\, } \nonumber\\
		&&~\cdot\ 2^{-2n\big( \sum_{\mc{S}\in\mathbb{A}^{l}} H\big(U_\mc{S}\mid U(\mathbb{A}(\mc{S})\setminus \mc{S})\big)+\delta(\epsilon_l)}\big) \\
	\leq &&\  (1-\epsilon_l)^{-2} \cdot 2^{- nH(U(\mathbb{A}^{l})\mid U(\mathbb{A}^{(l+1):K}))}  \\
	&&~ \cdot\ 2^{n(\sum_{\mc{S}\in\mathbb{A}^{l}} H(U_\mc{S}\mid U(\mathbb{A}(\mc{S})\setminus \mc{S}))}\\
	&&~\cdot\ 2^{n( H(U( \mathbb{A}^{l}\setminus\mc{J}^{l} )\mid U( \mc{J}^{l})U(\mathbb{A}^{(l+1):K}))}\\
	&&~\cdot\ 2^{-  n\sum_{\mc{S}\in\mathbb{A}^{l}\setminus\mc{J}^{l}}H(U_\mc{S}\mid U(\mathbb{A}(\mc{S})\setminus \mc{S}))} \cdot 2^{n\delta(\epsilon)}.
	\end{IEEEeqnarray*}

Therefore, we finally obtain the desired inequality
\begin{align*}
	&\mathrel{\phantom{=}}\frac{\text{Var}(|\mc{A}_l|)}{(\text{E}(|\mc{A}_l|))^2}\\
	&\leq (1-\epsilon)^{-2} \cdot  2^{-n \sum_{\mc{S}\in\mathbb{A}^{l}} r_\mc{S}}  \cdot \! \! \sum_{\mc{J}^{l}\subseteq\mathbb{A}^{l}\setminus\emptyset}   \!\! \Big(2^{n\sum_{\mc{S}\in\mathbb{A}^l\setminus \mc{J}^{l}} r_{\mc{S}} } \cdot\! \frac{p_{\mc{J}^{l}}}{(p_{\mathbb{A}^{l}})^2}\Big) \\
	&\leq (1-\epsilon)^{-2}\sum_{\mc{J}^{l}\subseteq\mathbb{A}^{l}\setminus\emptyset}   \Big(  2^{-n \sum_{\mc{S}\in\mc{J}^{l}}  r_{\mc{S}}   }\cdot 2^{n\delta(\epsilon)}\\
	&\mathrel{\phantom{=}}\cdot\ 2^{nH(U( \mathbb{A}^{l}\setminus\mc{J}^{l} )\mid U( \mc{J}^{l})U(\mathbb{A}^{(l+1):K})) - H(U(\mathbb{A}^{l})\mid U(\mathbb{A}^{(l+1):K}))}\\
	&\mathrel{\phantom{=}}\cdot\ 2^{n ( \sum_{\mc{S}\in\mathbb{A}^{l}} H(U_\mc{S}\mid U(\mathbb{A}(\mc{S})\setminus \mc{S})) -  \sum_{\mc{S}\in\mathbb{A}^{l}\setminus\mc{J}^{l}}H(U_\mc{S}\mid U(\mathbb{A}(\mc{S})\setminus \mc{S}))    )  }\Big),
\end{align*}
which tends to zero as $n\to\infty$ if for all $ \mc{J}^{l}\subseteq\mathbb{A}^{l} $
\begin{IEEEeqnarray}{rCl}
\sum_{\mc{S}\in\mc{J}^{l}}r_{\mc{S}} 
> &&~ H(U( \mathbb{A}^{l}\setminus\mc{J}^{l} )\mid U(  \mc{J}^{l}),U(\mathbb{A}^{(l+1):K}))\nonumber \\
		&&- \ H(U(\mathbb{A}^{l})\mid U(\mathbb{A}^{(l+1):K})) \nonumber\\
	&&+\   \sum_{\mc{S}\in\mathbb{A}^{l}} H(U_\mc{S}\mid U(\mathbb{A}(\mc{S})\setminus \mc{S})) \nonumber\\
	&&- \  \sum_{\mc{S}\in\mathbb{A}^{l}\setminus\mc{J}^{l}}H(U_\mc{S}\mid U(\mathbb{A}(\mc{S})\setminus \mc{S})) 
	+ \delta(\epsilon)\nonumber\\
	= &&\  H(U( \mathbb{A}^{l}\setminus\mc{J}^{l} )\mid U( \mc{J}^{l}),U(\mathbb{A}^{(l+1):K}))\nonumber\\
	&&- \  H(U(\mathbb{A}^{l})\mid U(\mathbb{A}^{(l+1):K})) \nonumber\\
	&&+ \  \sum_{\mc{S}\in\mc{J}^{l}} H(U_\mc{S}\mid U(\mathbb{A}(\mc{S})\setminus \mc{S})) + \delta(\epsilon)\nonumber\\
	= &&~\sum_{\mc{S}\in\mc{J}^{l}} H(U_\mc{S}\mid  U(\mathbb{A}(\mc{S})\setminus \mc{S})) \nonumber \\
	&&-\  H(U(\mc{J}^{l})\mid U(\mathbb{A}^{(l+1):K}))  + \delta(\epsilon).\label{coveringcondition}
\end{IEEEeqnarray}
Therefore, from \eqref{EqPro}, $ P\{\mc{E}_{1}\} $ tends to zero as $ n\to\infty $ if conditions~\eqref{coveringcondition} hold for all $ l\in[1:(K-1)] $ .}


\end{document}